\def\SUPPRESSACKNOWLEDGMENT{1}
\documentclass[onecolumn,draftclsnofoot]{IEEEtran}

\usepackage{cite}
\usepackage{color}
\usepackage{graphicx}
\usepackage{placeins}
\usepackage{amsmath,amssymb,amsthm,mathrsfs,mathtools}
\usepackage{tikz}
\usepackage{url}
\usetikzlibrary{positioning,calc,arrows.meta,decorations.pathreplacing}

\let\oldbrace\{
\def\{{\oldbrace\kern0.5pt}

\def\Cov{\mathop{\rm Cov}\nolimits}%
\def\ker{\mathop{\rm ker}\nolimits}%
\newcommand{\EE}{\mathsf E}
\newcommand{\PP}{\mathsf P}

\newcommand{\ctx}{\mathrm{ctx}}

\newcommand{\suppset}{\operatorname{supp}}
\newcommand{\Areg}{\mathcal A_{\rm reg}}

\newtheorem{theorem}{Theorem}
\newtheorem{lemma}{Lemma}
\newtheorem{proposition}{Proposition}
\newtheorem{corollary}{Corollary}
\theoremstyle{definition}
\newtheorem{definition}{Definition}

\newtheorem{remark}{Remark}

\newcommand{\draftEightBegin}{}
\newcommand{\draftEightEnd}{}

\begin{document}

\title{On the Information-Theoretic Limits of Latent-Space Watermarking Through Pretrained Generators}
\ifdefined\TITVERSION
\author{Jinwan Jeon,~\IEEEmembership{Student Member,~IEEE}, Minju Lee,~\IEEEmembership{Student Member,~IEEE}, and Sung Hoon Lim,~\IEEEmembership{Senior Member,~IEEE}%
\thanks{J. Jeon, M. Lee, and S. H. Lim are with Hallym University, CHUNCHEON-SI, GANGWON-DO, South Korea (e-mail: jeonjw@hallym.ac.kr; mjlee@hallym.ac.kr; shlim@hallym.ac.kr). Sung Hoon Lim is the corresponding author.}}
\else
\author{Jinwan Jeon, Minju Lee, and Sung Hoon Lim%
\thanks{J. Jeon, M. Lee, and S. H. Lim are with Hallym University, CHUNCHEON-SI, GANGWON-DO, South Korea (e-mail: jeonjw@hallym.ac.kr; mjlee@hallym.ac.kr; shlim@hallym.ac.kr). Sung Hoon Lim is the corresponding author.}}
\fi
\maketitle

\begin{abstract}
\draftEightBegin
We study latent-space watermarking through a pretrained generator using a prescribed latent-to-output stochastic mapping, called the renderer. A watermark encoder selects the latent input using a message and secret key. For every message and semantic context, the released output must have exactly the desired conditional output distribution. For finite alphabets, we derive rate--key inner and outer bounds and characterize the coding and coordination requirements for realizing watermark communication through the prescribed latent interface. When the target output distribution of the generator uniquely determines the corresponding latent input distribution through the renderer, a strengthened converse yields the capacity region; the same region governs explicit preservation of the pretrained latent distribution. We extend the analysis to general jointly Gaussian models and identify a sufficient statistic of the latent that captures both the watermark-bearing information available at the generated output and the latent coordination required to preserve its target distribution. For the vector Gaussian model, we further characterize the optimal allocation of the secret-key resource across the resulting modes. Finally, we turn to an emerging robustness threat that is particularly natural in generative watermarking: an adversary can regenerate the released sample to obtain a fresh realization of the same underlying content while attenuating or destroying the embedded watermark. We incorporate this robustness axis into our framework and characterize the one-pass compound capacity of the scalar Gaussian model when the semantic context is known to the encoder but hidden from the detector, while the regeneration attack may depend on that context. Extending the analysis to multiple rounds of repeated canonical regeneration, we characterize the resulting watermark-capacity decay.
\draftEightEnd
\end{abstract}

\begin{IEEEkeywords}
Generative watermarking, latent-space watermarking, information embedding, exact distribution preservation, secret key, Gaussian channels, regeneration attacks.
\end{IEEEkeywords}

\section{Introduction}
\label{sec:intro}

Generative models increasingly produce images, audio, video, and text whose provenance is difficult to establish from the content alone. Watermarking addresses this problem by embedding information that can later be detected or recovered. Alongside classical information-hiding methods and learned post-hoc image watermarks~\cite{Cox1997SpreadSpectrum,MoulinOSullivan2003,Zhu2018HiDDeN,Tancik2020StegaStamp}, recent methods integrate watermarking into language-model sampling and diffusion-based generation~\cite{Kirchenbauer2023LLMWatermark,Kuditipudi2024DistortionFree,ChristGunnZamir2024,Ho2020DDPM,Song2021SDE,Rombach2022LDM,Fernandez2023StableSignature,Rezaei2024LaWa,Gunn2025Undetectable}.

A particularly attractive architecture is \emph{latent-space watermarking}: information is inserted into the latent randomness driving a pretrained generator, while the generator remains unchanged. Tree-Ring modifies the initial diffusion noise~\cite{Wen2023TreeRing}; Gaussian Shading uses distribution-preserving sampling of a Gaussian latent~\cite{Yang2024GaussianShading}; and related work studies identification and extraction from such watermarks~\cite{Ci2024RingID,Yang2026FARI}. Working through the generator's native latent interface avoids redesigning the generation pipeline and can preserve its ordinary sampling behavior. However, selecting distinguishable latent inputs is not enough. The watermark is embedded in latent space before generation; it must then survive a stochastic mapping that the watermark designer does not control so that a legitimate party possessing the key can recover it from the generated output.

In recent work, Guo et al.~\cite{Guo2026Covert} studied an information-theoretic approach to generative watermarking by combining Gel'fand--Pinsker coding with channel synthesis. Building on this perspective, we specialize the setting to ask what changes when the watermark must be embedded through the \emph{fixed latent interface} of a pretrained generator.

The central question is therefore
\begin{quote}
\emph{What are the fundamental limits of reliably embedding watermark information through the fixed latent interface of a pretrained generator while preserving its ordinary conditional generation distribution?}
\end{quote}
We model that interface by a fixed renderer $q_{\rm r}(x|u,s)$, where $S$ is semantic context, $U$ the physical latent, and $X$ the released sample. The encoder controls how the latent interface is driven, not how the renderer transforms latent variation into public output. This restriction therefore shapes both the design of reliable watermark communication and the coordination needed to preserve the target generation distribution.

Preserving generation quality supplies the second part of the question. We require the released sample to have exactly the desired conditional distribution for every watermark message, after averaging the secret key. Exact statistical invisibility is classical in steganography and information hiding~\cite{Cachin1998Steganography,HopperLangfordAhn2002,WangMoulin2008Perfect,BarronChenWornell2003}; coupling-based and distortion-free constructions also appear in modern generative watermarking~\cite{DeWitt2023MEC,Tsur2025Couplings,Kuditipudi2024DistortionFree}. Our criterion combines finite-block distributional equality with message-wise, context-conditioned preservation. It is stronger than the asymptotically vanishing message-wise total-variation criterion of~\cite{Guo2026Covert}; the central question here is how exact preservation and reliable communication interact when generation must pass through a fixed renderer.

The resulting key requirement has an operational interpretation. The encoder must coordinate a watermark representation with the physical latent, whereas the detector sees only what the renderer releases. A chosen representation can therefore contain latent dependence that must be coordinated before the stochastic generator but remains unresolved from the public sample. The question is not whether every unused latent coordinate consumes key; it is which dependence the watermark representation needs, and which part of it the renderer makes useful for decoding the watermark.

Generative watermarking also introduces a robustness threat that is particularly natural to the generative setting. Rather than perturbing the released sample directly, an adversary can regenerate it to obtain a fresh realization of the same underlying content while attenuating or removing the embedded watermark. In the latent-space setting, such an attack can map the released sample back to a latent representation and invoke the generator again. Empirical work shows that generative regeneration can degrade or remove invisible watermarks~\cite{An2024WAVES,Zhao2024Removable}. Because our framework already treats watermark communication through a prescribed generator, it provides a natural basis for studying this regeneration mechanism as a robustness problem. We therefore ask how much watermark information can survive repeated use of the same prescribed renderer when each regenerated sample must retain the desired distribution and satisfy a per-pass distortion constraint.

\noindent\textit{Contributions.}\ The paper develops four primary answers to these questions.
\begin{enumerate}
\item \emph{Fixed-interface rate--key limits.} We derive finite-alphabet inner and outer bounds and explain their resource structure through context adaptation, detector packing, and latent coordination. For the general finite-alphabet setting, a random-projection example shows that the outer bound can be strictly loose, while the inner bound is capacity-achieving for that example.
\item \emph{Capacity through latent-distribution identification.} We specialize our bounds to the unique-preimage condition in which the fixed renderer and the target output distribution uniquely determine the latent input distribution. That is, only one latent distribution is compatible with the pretrained generation model. For this specialization, we obtain the capacity region and an equivalent interpretation of the secret key as a latent-coordination budget. We further show that the same capacity region applies when preservation of the pretrained latent distribution is imposed explicitly, in which case the unique-preimage condition is unnecessary.
\item \emph{Gaussian renderer geometry.} We extend our results to the general jointly Gaussian vector setting. We show that the physical latent can be reduced losslessly to its renderer-visible effect, revealing a sufficient statistic of the latent for watermark communication and generation coordination. We characterize the finite-key capacity, show that hiding the context at the detector causes no loss, and derive the optimal secret-key allocation across renderer-visible modes. A scalar case study explains the within-mode coordination tradeoff before we turn to the vector allocation problem.
\draftEightBegin
\item \emph{Regeneration robustness.} We also extend the framework to a same-generator regeneration attack, an emerging threat in which a released sample is regenerated through the generative model in an attempt to weaken or remove the watermark. Focusing on this attack family, we show that a fixed attack changes the effective renderer and therefore allows the base coding theory to be reused. For the scalar Gaussian model with random context known at the encoder and hidden from the detector, we characterize the minimum distortion of a genuine rerender and the compound capacity achieved by a single code and detector over the full admissible one-pass attack family. We then extend the analysis to repeated canonical Gaussian regeneration and characterize the resulting capacity decay and the number of regeneration rounds compatible with a positive target rate. The unrestricted multiround compound minimax problem remains outside the present scope.
\draftEightEnd
\end{enumerate}

Section~\ref{sec:model} defines the fixed interface and preservation criterion. Sections~\ref{sec:finite} and~\ref{sec:up} develop the general bounds, their interpretation, and the capacity region under latent-distribution identification. Section~\ref{sec:gaussian} gives the Gaussian theory, Section~\ref{sec:robust} studies regeneration, and Section~\ref{sec:discussion} draws together the implications. The proofs are in the appendices.

\noindent\textit{Notation.}\ We closely follow the notation in~\cite{ElGamalKim2011}. Random variables are denoted by uppercase letters and their realizations by the corresponding lowercase letters; sets and alphabets use calligraphic letters. We use $p$ for a generic probability mass function or density, with arguments and conditioning identifying its joint, marginal, or conditional form. Distinguished distributions are assigned separate symbols when their roles require it. Probabilities of events and expectations are denoted by $\PP$ and $\EE$, respectively; measure notation is used when a density need not exist. A length-$n$ sequence is $X^n=(X_1,\ldots,X_n)$, and $X\perp Y$ denotes independence. For random vectors, $\Sigma_X$ and $\Sigma_{X|Y}$ denote covariance and conditional covariance matrices, respectively, and $A\succ0$ means that the symmetric matrix $A$ is positive definite. All logarithms are to base two and information quantities are measured in bits. For a finite alphabet $\mathcal A$, $\Delta(\mathcal A)$ denotes its probability simplex.

\section{Problem Formulation}
\label{sec:model}

We specify the pretrained generator, then the watermarking code and its preservation requirement.

\subsection{Pretrained Generator and Latent Interface}
\label{sec:model-generator}

Let $(S,X)\sim q_0(s,x)$ be a memoryless source pair on $\mathcal S\times\mathcal X$, where $S$ represents the semantic/context state and $X$ the corresponding observable/generated sample. We call $q_0$ the \emph{desired distribution}, using the same symbol for its joint, marginal, and conditional forms. The pair specifies the desired public statistics: in the watermarking problem, the encoder observes $S^n$ and generates $X^n$ through the pretrained system, rather than observing $X^n$ beforehand.

For one invocation, the pretrained system realizes $q_0(x|s)$ through a physical latent $U\in\mathcal U$. Its conditional latent prior is described by $q_{\rm lat}(u|s)$, which we call the \emph{pretrained latent law}. Given $(u,s)$, the system generates $X$ through the fixed stochastic mapping $q_{\rm r}(x|u,s)$, called the \emph{renderer}. In channel terms, this is a fixed state-dependent channel from the latent $U$ to the public output $X$. Thus ordinary generation follows
\begin{equation}
U\sim q_{\rm lat}(\cdot|S),
\quad
X\sim q_{\rm r}(\cdot|U,S),
\label{eq:ordinary-generation}
\end{equation}
and the \emph{pretrained-system joint law} is
\begin{equation}
q_{\rm pre}(s,u,x)
=q_0(s)q_{\rm lat}(u|s)q_{\rm r}(x|u,s).
\label{eq:pretrained-joint}
\end{equation}
Its $(S,X)$ marginal is the desired distribution $q_0(s,x)$; equivalently, for every relevant $s$,
\begin{equation}
q_0(x|s)
=\sum_u q_{\rm lat}(u|s)q_{\rm r}(x|u,s).
\label{eq:q0}
\end{equation}
Sums have the corresponding integral/kernel interpretation on continuous alphabets.

The renderer is inherited from the pretrained system and transferred unchanged into the watermarking problem. In watermark design, however, the latent-input distribution $p(u|s)$ is designable. Choosing $p(u|s)=q_{\rm lat}(u|s)$ preserves the pretrained latent law, but this equality is not required in the base problem. Once $q_0$ and $q_{\rm r}$ are fixed, the latent-input distribution may change as long as the prescribed output distribution is preserved through the fixed renderer.

\subsection{Watermark Code and Performance Criteria}

At blocklength $n$, the \emph{watermark latent encoder} is an arbitrary conditional distribution
\begin{equation}
p(u^n|s^n,m,k),
\label{eq:watermark-latent-encoder}
\end{equation}
where $M$ is the watermark message and $K$ the secret key, with
\begin{equation}
M\sim\mathrm{Unif}[1:2^{nR}],
\quad
K\sim\mathrm{Unif}[1:2^{nR_k}].
\label{eq:message-key-laws}
\end{equation}
The message, key, and context sequence are mutually independent, and $q_0(s^n)=\prod_{i=1}^n q_0(s_i)$. The encoder may couple its latent inputs across the block. Its stochasticity subsumes encoder-private randomization, with deterministic encoding as a special case.

The renderer operates memorylessly:
\begin{equation}
q_{\rm r}(x^n|u^n,s^n)
=\prod_{i=1}^n q_{\rm r}(x_i|u_i,s_i).
\label{eq:renderer-product}
\end{equation}
The detector observes the released sample $X^n$ and the secret key $K$, but not $S^n$.

\begin{definition}[Watermarking code]
\label{def:watermark-code}
An $(n,R,R_k;q_0,q_{\rm r})$ \emph{watermarking code} consists of a watermark latent encoder~\eqref{eq:watermark-latent-encoder}. 
The watermarked latent $U^n$ with context $S^n$ is processed through the fixed renderer~\eqref{eq:renderer-product} which outputs $X^n$. 
A detector applies the mapping
\begin{equation}
g_n:\mathcal X^n\times[1:2^{nR_k}]\to[1:2^{nR}].
\end{equation}
The detector produces $\hat M=g_n(X^n,K)$ and does not observe the semantic context $S^n$.
\end{definition}

With the uniform message and key distributions denoted by $p(m)$ and $p(k)$, the full joint distribution induced by a code is
\begin{align}
&p(s^n,m,k,u^n,x^n)\nonumber\\
&\quad=q_0(s^n)p(m)p(k)
 p(u^n|s^n,m,k)q_{\rm r}(x^n|u^n,s^n).
\label{eq:code-induced-joint}
\end{align}
In particular, averaging over the key and the encoder and renderer randomness gives
\begin{align}
&p(x^n|s^n,m)\nonumber\\
&\quad=\sum_{k,u^n}p(k)p(u^n|s^n,m,k)q_{\rm r}(x^n|u^n,s^n),
\label{eq:induced-output-distribution}
\end{align}
which we call the \emph{induced output distribution}.

A rate pair $(R,R_k)$ is \emph{achievable} if there exists a sequence of blocklength-$n$ watermarking codes satisfying
\begin{equation}
\PP\{\hat M\ne M\}\to0
\label{eq:reliability}
\end{equation}
and, at every blocklength and for every message $m$,
\begin{equation}
p(x^n|s^n,m)
=\prod_{i=1}^n q_0(x_i|s_i),
\label{eq:exact-output}
\end{equation}
that is, the code induced output distribution matches exactly with the target law $q_0(x^n|s^n)$ for every fixed message. 
The equality holds for almost every $s^n$ under $q_0(s^n)$; on finite alphabets, it holds for every $s^n$ with $q_0(s^n)>0$. Distributional equalities are interpreted as equalities of measures, or of densities where these exist. The conditioning in~\eqref{eq:exact-output} does not include the key: the induced output distribution is the key-averaged distribution in~\eqref{eq:induced-output-distribution}.

The \emph{capacity region} $\mathcal C$ is the closure of the set of achievable rate pairs. Concatenating exact codes preserves~\eqref{eq:exact-output}, so coded time sharing makes $\mathcal C$ convex; unused key bits imply that it is upward closed in $R_k$. For a fixed key-rate budget $R_k$, define its upper envelope by
\[
C(R_k)
\triangleq
\sup\{R:(R,R_k)\in\mathcal C\}.
\]
When the detector additionally observes $S^n$ (benchmark case), we use $\mathcal C_{\ctx}$ and $C_{\ctx}(R_k)$ for the corresponding capacity region and upper envelope, with the encoder, renderer, and exact preservation criterion unchanged.

We call~\eqref{eq:exact-output} \emph{exact output preservation}. It serves two related roles. It is a message-wise \emph{watermark invisibility} requirement, since the message produces no distributional signature for an observer without the key. It also preserves the pretrained generator's desired conditional output statistics for every relevant context. The desired distribution $q_0$ is fixed by the problem, whereas $p(x^n|s^n,m)$ is induced by the chosen code; the constraint requires these public statistics to agree exactly. It does not require the code to preserve $q_{\rm lat}$. Wang and Moulin's perfect-security criterion requires equality of the covertext and stegotext block marginals~\cite{WangMoulin2008Perfect}. Here equality is imposed separately for each message and context; it is also finite-block exact, rather than asymptotically vanishing message-wise total variation as in~\cite{Guo2026Covert}. The distinction of interest here is not exactness alone, but exactness when watermark communication must be realized through the prescribed latent renderer.

\begin{figure}[t]
\centering
\resizebox{0.70\linewidth}{!}{
\begin{tikzpicture}[
  >=Latex,
  line width=0.45pt,
  block/.style={draw, rectangle, minimum width=2.15cm, minimum height=1.02cm, inner sep=2pt, align=center},
  flow/.style={->, line width=0.45pt},
  branch/.style={->, line width=0.45pt},
  every node/.style={font=\small}
]
  \node[block] (enc) at (0,0) {Encoder};
  \node[block] (ren) at (3.00,0) {Renderer\\[-1pt]{\scriptsize $q_{\rm r}(x|u,s)$}};
  \node[block] (att) at (6.00,0) {Attack\\[-1pt]{\scriptsize $a(y|x,s)$}};
  \node[block] (dec) at (9.00,0) {Detector};

  \draw[flow] (-1.55,0) -- node[above, midway] {$M$} (enc.west);
  \draw[flow] (enc.east) -- node[above, midway] {$U^n$} (ren.west);
  \draw[flow] (ren.east) -- node[above, midway] {$X^n$} (att.west);
  \draw[flow] (att.east) -- node[above, midway] {$Y^n$} (dec.west);
  \draw[flow] (dec.east) -- node[above, midway] {$\hat M$} (10.55,0);

  \coordinate (ktopenc) at (0,1.55);
  \coordinate (ktopdec) at (9.00,1.55);
  \draw[line width=0.45pt] (ktopenc) -- (ktopdec);
  \node[above] at (4.50,1.55) {$K$};
  \draw[branch] (ktopenc) -- (enc.north);
  \draw[branch] (ktopdec) -- (dec.north);

  \coordinate (sbotenc) at (0,-1.55);
  \coordinate (sbotren) at (3.00,-1.55);
  \draw[line width=0.45pt] (sbotenc) -- (sbotren);
  \node[below] at (1.50,-1.55) {$S^n$};
  \draw[branch] (sbotenc) -- (enc.south);
  \draw[branch] (sbotren) -- (ren.south);
\end{tikzpicture}}
\caption{Latent-space watermarking through a fixed pretrained renderer and the same-generator attack extension. In the no-attack baseline the attack block is omitted, equivalently $Y^n=X^n$. The possible context dependence of the attack is suppressed in the diagram and written explicitly in Sec.~\ref{sec:model-attack}.}
\label{fig:model}
\end{figure}

As a final note on the formulation, the key restriction is that the watermark is embedded through a renderer inherited from a pretrained generative model and held fixed during watermark design. If the renderer were fully redesignable, the encoder could realize the desired output mechanism directly, and the latent structure would no longer constitute an operational constraint. In this fully designable limit, the model reduces to the output-watermarking formulation studied by Guo et al.~\cite{Guo2026Covert}. Thus, the distinction considered here is precisely the restriction imposed by watermarking through a fixed pretrained renderer. The regeneration-attack model and its associated robustness criterion are introduced separately in Section~\ref{sec:robust}.

\section{General Rate--Key Limits for Latent-Space Watermarking}
\label{sec:finite}

In this section, we first state the general finite-alphabet bounds, then expose their coding mechanism and the additional information needed for a capacity converse.

Let $V$ denote an auxiliary random variable, distinct from the physical latent $U$, with single-letter conditional distribution
\begin{equation}
p(v,u|s)=p(v|s)p(u|s,v).
\label{eq:design-factorization}
\end{equation}
The latent-completion kernel $p(u|s,v)$ maps the auxiliary $V$, together with the context $S$, into the physical latent alphabet $\mathcal U$ inherited from the pretrained renderer. Together with $p(v|s)$, it determines the joint conditional distribution~\eqref{eq:design-factorization}. This is a single-letter design object, not a claim that the operational watermark encoder acts symbol by symbol.

The resulting full joint distribution is restricted to the class
\begin{equation}
\mathscr P(q_0;q_{\rm r})
=\left\{\begin{array}{l}
p(s,v,u,x)=q_0(s)p(v,u|s)q_{\rm r}(x|u,s):\\
\hfill p(s,x)=q_0(s,x)
\end{array}\right\}.
\label{eq:test-class}
\end{equation}
Thus $p(u|s)=\sum_v p(v,u|s)$ must satisfy $\sum_u p(u|s)q_{\rm r}(x|u,s)=q_0(x|s)$, but need not equal $q_{\rm lat}(u|s)$. Unless stated otherwise, the information quantities below are evaluated under the selected distribution $p\in\mathscr P(q_0;q_{\rm r})$.

\subsection{General Finite-Alphabet Bounds}

\begin{theorem}[General finite-alphabet rate--key bounds]
\label{thm:finite}
Let $S\sim q_0(s)$ be the semantic source, $q_{\rm r}(x|u,s)$ be the fixed renderer, and $q_0(x|s)$ be the desired conditional output distribution. Every rate pair in the convex hull of all rate pairs satisfying
\begin{align}
R
&< I(V;X)-I(V;S),
\label{eq:inner-rate}\\
R_k
&> R+I(V;S,U|X)
\label{eq:inner-key}
\end{align}
for some $p\in\mathscr P(q_0;q_{\rm r})$ is achievable. Conversely, every achievable rate pair $(R,R_k)$ must satisfy
\begin{align}
R
&\le I(V;X)-I(V;S),
\label{eq:outer-rate}\\
R_k
&\ge R+I(V;S|X)
\label{eq:outer-key}
\end{align}
for some $p\in\mathscr P(q_0;q_{\rm r})$. The auxiliary alphabet may be restricted to $|\mathcal V|\le |\mathcal S||\mathcal U|+1$.
\end{theorem}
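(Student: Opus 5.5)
For the achievability part, I will combine Gel'fand--Pinsker binning with exact channel synthesis, in the spirit of~\cite{Guo2026Covert}, but with the synthesis target placed on the joint law of $(S^n,U^n)$ and not on $X^n$ directly. Fix $p\in\mathscr P(q_0;q_{\rm r})$.

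\emph{Codebook.} For each key value $k$ and message $m$, generate a bin of $2^{n\tilde R}$ codewords $v^n(m,l,k)$ i.i.d.\ according to $p(v)$. I will take $\tilde R>I(V;S)$ so that, given $S^n$, a codeword coordinated with $S^n$ can be selected. I will also take $R+\tilde R<I(V;X)$ so that a detector holding $K$ can find the unique jointly typical codeword from $X^n$ alone, without $S^n$.

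\emph{Making the preservation exact.} I would not use typicality-based selection. The encoder instead picks $l$ through a likelihood encoder (soft covering), then generates $U^n$ through $\prod_i p(u_i|s_i,v_i)$. The key randomizes over codebooks. Soft covering over the key-plus-index set shows that the key-averaged induced law of $(S^n,U^n)$ for each fixed $m$ is close in total variation to $\prod q_0(s)p(u|s)$ when $R_k+\tilde R> I(V;S,U)$. Because the renderer is applied after $U^n$, only the $(S,U)$ coordination matters, and this is what produces the conditional form $I(V;S,U|X)$ in~\eqref{eq:inner-key}. Rewriting with $\tilde R\approx I(V;X)-R$, we have $I(V;S,U)-I(V;X)+R = R+I(V;S,U|X)$, using the Markov chain $V-(S,U)-X$.

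Converting approximate synthesis into finite-block equality is the main obstacle. The plan is an exact-simulation fix: mix with a small-probability correction branch. Write the achieved law as $(1-\epsilon)\mu+\epsilon\nu$, where $\nu$ is chosen as $(\text{target}-(1-\epsilon)\mu)/\epsilon$. For this $\nu$ to be a valid law, I will truncate to a likelihood-ratio-bounded event, following the exact-synthesis technique of Yu--Tan or the rejection-sampling approach of Cuff. The correction branch only costs $\epsilon\to0$ in error probability. Time sharing then gives the convex hull.

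For the converse, Fano gives $nR\le I(M;X^n|K)+n\epsilon_n$. The standard Gel'fand--Pinsker Csisz\'ar-sum steps, with $V_i=(M,K,X^{i-1},S_{i+1}^n)$, give $R\le \frac1n\sum_i[I(V_i;X_i)-I(V_i;S_i)]+\epsilon_n$. For the key bound, exact preservation for each $m$ gives $M\perp (S^n,X^n)$, so $H(M)=H(M|S^n,X^n)$. I will then write
\[
nR\le H(M|X^n,S^n)-H(M|X^n,K)+n\epsilon_n\le I(M;K|X^n)-I(M;S^n|X^n)+n\epsilon_n+\text{(terms)},
\]
and bound $nR_k\ge H(K)\ge I(K;S^n,M|X^n)$. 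This reduces to $nR_k\ge nR+I(M,K;S^n|X^n)-n\epsilon_n$. Single-letterizing $I(M,K;S^n|X^n)\ge\sum_i I(V_i;S_i|X_i)$ relies on the memoryless $(S_i,X_i)$ being i.i.d.\ under $q_0$, which is guaranteed by exact preservation. With a time-sharing variable $Q$ absorbed into $V$, each $p(s_i,v_i,u_i,x_i)$ lies in $\mathscr P$, as required. The cardinality bound follows from the support lemma: it preserves $p(s,u)$, which takes $|\mathcal S||\mathcal U|-1$ constraints, plus the two or three information functionals.
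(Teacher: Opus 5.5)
Your proposal follows the paper's proof essentially step for step. The achievability uses the same $(m,\ell,k)$ likelihood-encoder codebook with covering $\tilde R>I(V;S)$, packing $R+\tilde R<I(V;X)$, and key-plus-index latent synthesis $R_k+\tilde R>I(V;S,U)$, and the converse uses the same ingredients: $M\perp(S^n,X^n)$, the key accounting $H(K)\ge I(K;M,S^n|X^n)$, and $V_i=(M,K,X^{i-1},S_{i+1}^n)$.

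The only variation is exactification. The paper applies an encoder-local conditional maximal coupling of the key-averaged $p(\bar u^n|s^n,m)$ to $\prod_i p(u_i|s_i)$; its mismatch probability equals the total-variation gap, and it costs no key or rate because the encoder knows $(s^n,m)$. Your mixture-with-truncation is a valid but slightly lossier version of that same local correction. It works provided the truncated mass is routed into the correction branch and $\epsilon_n$ is chosen with $\delta_n/\epsilon_n\to0$; Yu--Tan's remote-synthesis machinery is not needed here.
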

The familiar Gel'fand--Pinsker form of the payload expression does not remove the renderer restriction: the optimization is over the constrained class~\eqref{eq:test-class}, not over freely chosen public-output channels.

Achievability combines a Gel'fand--Pinsker likelihood-encoding/soft-covering construction, which coordinates the physical latent before the fixed renderer, with an encoder-local maximal-coupling step that enforces exact output preservation. The converse uses Fano's inequality and the exact product distribution of $(S^n,X^n)$ to obtain the single-letter outer bound. Complete proofs are given in Apps.~\ref{app:finite-ach} and~\ref{app:finite-conv}.

The outer region is convex because time sharing between admissible distributions preserves the common $(S,X)$ marginal $q_0$, so both outer-bound functionals average linearly. The single-letter inner region need not be convex, and its convex hull is achievable by coded time sharing.

\begin{remark}[Loose converse]
\label{rem:finite-converse-tightness}
The outer bound in Thm.~\ref{thm:finite} is not tight in general; Prop.~\ref{prop:strict-converse-example} in App.~\ref{app:finite-conv} gives an explicit strict example.
\end{remark}

\subsection{From Coding Resources to the Key Requirement}

We next summarize the coding construction at the level needed to interpret the rate--key bounds. The discussion isolates the covering, packing, and latent-coordination requirements that determine the key resource.

For a fixed admissible $p(v,u|s)$, generate codewords
\[
V^n(m,\ell,k),\qquad
m\in[1:2^{nR}],\quad
\ell\in[1:2^{n\hat R}],\quad
k\in[1:2^{nR_k}].
\]
The encoder selects the covering index $L$ so that $V^n(M,L,K)$ covers the context sequence $S^n$. For each fixed $m$, the aggregate key and covering indices synthesize the required pair $(S^n,U^n)$. Given the key, the detector decodes $(m,\ell)$ through the renderer output. These tasks give
\begin{align}
\hat R&>I(V;S),
&&\text{covering},
\label{eq:resource-cover}\\
R+\hat R&<I(V;X),
&&\text{packing},
\label{eq:resource-pack}\\
R_k+\hat R&>I(V;S,U),
&&\text{latent synthesis}.
\label{eq:resource-synthesis}
\end{align}
These three conditions concern the same codebook and are not independent resource expenditures. Note that Fourier--Motzkin elimination of $\hat R$ from~\eqref{eq:resource-cover}--\eqref{eq:resource-synthesis} yields the inner bound in Thm.~\ref{thm:finite}.

We explain this resource allocation with the aid of Fig.~\ref{fig:key-resource-accounting}, which illustrates how the covering, packing, and latent-synthesis requirements interact in determining the message and key rates.
\begin{figure}[t]
\centering
\resizebox{0.516\linewidth}{!}{\begin{tikzpicture}[font=\small]
\def\dW{3.15}
\def\rW{2.25}
\def\tW{1.55}
\def\h{0.78}
\def\xLeft{3.85}
\pgfmathsetmacro{\xR}{\xLeft+\dW}
\pgfmathsetmacro{\xT}{\xR+\rW}
\pgfmathsetmacro{\xEnd}{\xT+\tW}

\node[anchor=east,align=right] at (3.55,3.89) {Conditional latent-\\coordination cost};
\node[anchor=east,align=right] at (3.55,0.79) {Net packing\\budget};

\draw[line width=0.9pt] (\xLeft,3.50) rectangle ++(\dW,\h);
\draw[line width=0.9pt] (\xR,3.50) rectangle ++(\rW,\h);
\draw[line width=0.9pt] (\xT,3.50) rectangle ++(\tW,\h);
\node[font=\scriptsize] at (\xLeft+0.5*\dW,3.50+0.5*\h) {$I(V;S,U|X)$};
\node at (\xR+0.5*\rW,3.50+0.5*\h) {$R$};
\node at (\xT+0.5*\tW,3.50+0.5*\h) {$t$};

\draw[line width=0.9pt] (\xR,0.40) rectangle ++(\rW,\h);
\draw[line width=0.9pt] (\xT,0.40) rectangle ++(\tW,\h);
\node at (\xR+0.5*\rW,0.40+0.5*\h) {$R$};
\node at (\xT+0.5*\tW,0.40+0.5*\h) {$t$};

\draw[densely dashed,thin] (\xR,0.22) -- (\xR,4.45);
\draw[densely dashed,thin] (\xT,0.22) -- (\xT,4.45);
\draw[densely dashed,thin] (\xEnd,0.22) -- (\xEnd,4.45);

\draw[decorate,decoration={brace,amplitude=5pt,mirror}] (\xLeft,3.30) -- (\xT,3.30)
 node[midway,below=6pt] {$R_k$};
\draw[decorate,decoration={brace,amplitude=5pt,mirror}] (\xT,3.30) -- (\xEnd,3.30)
 node[midway,below=6pt] {$t$};

\draw[decorate,decoration={brace,amplitude=5pt}] (\xR,1.34) -- (\xEnd,1.34)
 node[midway,above=6pt] {$I(V;X)-I(V;S)=R+t$};

\pgfmathsetmacro{\xMid}{0.5*(\xLeft+\xEnd)}
\node at (\xMid,-0.48) {$\hat R=I(V;S)+t$};
\end{tikzpicture}}
\caption{Boundary resource accounting for a fixed admissible $p(v,u|s)$. The net packing budget $I(V;X)-I(V;S)$ is shared by the payload rate $R$ and the excess covering rate $t$. The conditional latent-coordination cost is supplied by the same excess covering rate together with the key rate $R_k$. The aligned segments show that allocating more packing to payload leaves less available for excess covering and therefore more coordination to be supplied by the key. Equalities in the figure are boundary relations in closure.}
\label{fig:key-resource-accounting}
\end{figure}
Write
\begin{equation}
\hat R=I(V;S)+t.
\label{eq:excess-multicoding}
\end{equation}
Here $t$ measures the excess auxiliary-codebook rate above that required to cover the context sequence $S^n$. The remaining constraints are
\begin{align}
R+t&< I(V;X)-I(V;S), \quad t>0,
\label{eq:net-packing}\\
R_k+t&> I(V;U|S).
\label{eq:net-coordination}
\end{align}
The right-hand side of~\eqref{eq:net-packing} is the net packing budget shared between the payload rate $R$ and the excess auxiliary-codebook rate $t$. The right-hand side of~\eqref{eq:net-coordination} is a \emph{conditional latent-coordination cost}. Allocating more of the packing budget in~\eqref{eq:net-packing} to the payload rate $R$ leaves less available for the excess covering rate $t$. The key must then supply the remaining coordination rate in~\eqref{eq:net-coordination}. Increasing $t$ does not reduce this fixed distribution's cost; it supplies more of that cost while using packing that could otherwise carry payload.

Thus $t$ and $R_k$ substitute one-for-one inside~\eqref{eq:net-coordination}, but not throughout the construction. The key is already known to the detector and does not enter its packing load. It cannot, however, replace the covering rate needed to cover the context sequence $S^n$. The payload rate $R$ plays a different role: exact preservation in~\eqref{eq:exact-output} is imposed for each fixed message, so diversity across messages cannot contribute to latent synthesis even though the payload consumes the packing budget in~\eqref{eq:net-packing}.

This gives a useful comparison with Wagner's finite-common-randomness rate--distortion--perception formulation~\cite{Wagner2022CommonRandomness}. There, the source-dependent description index also contributes reconstruction-synthesis diversity. In the present multicoding construction, its closest counterpart is the covering index $L$, not the independent watermark message $M$. Our detector must recover $L$ through the fixed renderer, so the covering index shares a packing budget with payload; the key supplies coordination without joining that decoded load.

\subsection{Latent Completion: What the Renderer Does Not Resolve}

For feasible payload under the selected distribution, the largest excess covering rate for multicoding in closure is $t=I(V;X)-I(V;S)-R$. Substituting it into~\eqref{eq:net-coordination} gives~\eqref{eq:inner-key}, since $V-(S,U)-X$. The key overhead decomposes as
\begin{equation}
I(V;S,U|X)=I(V;S|X)+I(V;U|S,X).
\label{eq:completion-decomposition}
\end{equation}
The second term,
\begin{equation}
I(V;U|S,X),
\label{eq:completion-cost}
\end{equation}
is the \emph{latent-completion cost}. It does not represent additional watermark information available to the detector; rather, it quantifies the dependence between the watermark representation $V$ and the physical latent $U$ that remains unresolved after $S$ and $X$ are known.

For example, if $U=(U_{\rm pub},U_{\rm hid})$ and $X=U_{\rm pub}$, then~\eqref{eq:completion-cost} is $I(V;U_{\rm hid}|S,X)$. Only dependence of the hidden part on the coding representation contributes. There is no charge merely for having a renderer-invisible coordinate: the encoder may choose its representation to avoid unnecessary dependence. Section~\ref{sec:gaussian} makes this point constructive by generating unused physical-latent detail privately.

At the full payload supported by a selected distribution, $R=I(V;X)-I(V;S)$ and $t=0$ in closure. The key requirement becomes
$
R_k\ge I(V;U|S)
=I(V;X|S)+I(V;U|S,X).
$
The corresponding fully designable-output construction in~\cite{Guo2026Covert} has the public-output requirement $I(V;X|S)$. The additional term here concerns realizing the representation in the physical latent before rendering. We note that this comparison is for a selected distribution and its factorization; it is not automatically the gap between two optimized regions, nor a converse obliging every possible code to use that representation.

Two endpoints sharpen the interpretation. First, exact message-wise preservation gives $M\perp X^n$, while reliable decoding from $(X^n,K)$ requires $H(M|X^n,K)\le n\epsilon_n$. Hence
$H(M)\le H(K)+n\epsilon_n,$
so $R\le R_k$ and zero key cannot support positive payload. Second, with unlimited key, the inner and outer bounds have the common payload capacity
\begin{equation}
C(\infty)=\max_{p\in\mathscr P(q_0;q_{\rm r})}
[I(V;X)-I(V;S)].
\label{eq:unlimited-general}
\end{equation}
The finite-key gap disappears, but the fixed-renderer restriction on the optimization remains.

\section{Capacity Region under a Unique-Preimage Condition}
\label{sec:up}

Exact output preservation becomes more informative when the desired output has a unique latent preimage. We first show that this one-letter property identifies the full block latent distribution, then use it to strengthen the converse.

\subsection{Unique-Preimage Condition}

For each context value $s$, define the preimage set
\begin{equation}
\mathcal P_s(q_0;q_{\rm r})
=
\left\{
\mu\in\Delta(\mathcal U):
\sum_u\mu(u)q_{\rm r}(x|u,s)=q_0(x|s),\ \forall x
\right\}.
\label{eq:preimage-set}
\end{equation}
That is, $\mathcal P_s(q_0;q_{\rm r})$ is the set of viable latent priors that, when paired with the pretrained renderer $q_{\rm r}$, induce the desired context-conditioned output distribution $q_0(x|s)$.

\begin{definition}[Unique preimage relative to the desired distribution]
\label{def:up}
The renderer satisfies the \emph{unique-preimage (UP) condition} relative to $q_0$ if, for every $s$ with $q_0(s)>0$,
\begin{equation}
\mathcal P_s(q_0;q_{\rm r})
=
\{q_{\rm lat}(\cdot|s)\}.
\label{eq:up}
\end{equation}
Since $q_{\rm lat}(\cdot|s)$ is the pretrained latent prior by construction, uniqueness of the viable latent prior implies~\eqref{eq:up}.
\end{definition}

The condition is relative to the desired distribution $q_0$: the renderer need not distinguish every possible latent distribution, only the pretrained latent preimage that produces $q_0(\cdot|s)$. A simple stronger sufficient condition is distributional injectivity of the renderer. In finite alphabets, if the matrix $W_s(x|u)=q_{\rm r}(x|u,s)$ has full input-row rank for every relevant $s$ (which requires $|\mathcal U|\le|\mathcal X|$), then the map $\mu\mapsto\mu W_s$ is injective and the UP condition holds. 

\begin{lemma}[Latent-distribution identification under UP]
\label{lem:up-latent}
Under the UP condition, every conditionally exact code satisfies, after averaging the secret key and stochastic encoder randomness,
\begin{equation}
p(u^n|s^n,m)
=
\prod_{i=1}^n q_{\rm lat}(u_i|s_i)
\label{eq:latent-product}
\end{equation}
for every message $m$ and every relevant context sequence $s^n$.
\end{lemma}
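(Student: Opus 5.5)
The plan is to reduce the block statement to the one-letter UP condition by peeling off one coordinate at a time. The key fact is that sub-block marginals of a nonnegative measure stay nonnegative, so each peeled object is again a candidate preimage.

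\emph{Block identity.} Fix a message $m$ and a relevant context sequence $s^n$, and write $\nu(u^n)=p(u^n|s^n,m)$ for the key- and randomness-averaged latent law. By~\eqref{eq:induced-output-distribution} and the product form~\eqref{eq:renderer-product}, exact preservation~\eqref{eq:exact-output} reads
\begin{equation*}
\sum_{u^n}\nu(u^n)\prod_{i=1}^n q_{\rm r}(x_i|u_i,s_i)=\prod_{i=1}^n q_0(x_i|s_i)\quad\text{for all }x^n .
\end{equation*}
So $\nu$ is a nonnegative preimage of the product output law under the tensorized renderer. I will show by induction on $n$ that the only such $\nu$ is $\prod_i q_{\rm lat}(u_i|s_i)$. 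The base case $n=1$ is exactly Definition~\ref{def:up}.

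\emph{Induction step.} Suppose the claim holds for blocks of length $n-1$. For each fixed $x_1$, define the nonnegative measure on $\mathcal U^{n-1}$
\begin{equation*}
\nu_{x_1}(u_2^n)=\sum_{u_1}\nu(u^n)\,q_{\rm r}(x_1|u_1,s_1).
\end{equation*}
It satisfies $\nu_{x_1}\bigl(\bigotimes_{i\ge2}q_{\rm r}\bigr)=q_0(x_1|s_1)\prod_{i\ge2}q_0(x_i|s_i)$, and its total mass is $q_0(x_1|s_1)$ because the renderer rows are stochastic. There are two cases.
\begin{itemize}
\item If $q_0(x_1|s_1)=0$, then $\nu_{x_1}$ is a nonnegative measure with zero output and hence zero mass, so $\nu_{x_1}\equiv0$.
\item If $q_0(x_1|s_1)>0$, the normalized measure $\nu_{x_1}/q_0(x_1|s_1)$ is a probability preimage of length $n-1$. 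The induction hypothesis gives $\nu_{x_1}=q_0(x_1|s_1)\prod_{i\ge2}q_{\rm lat}(u_i|s_i)$.
\end{itemize}
In both cases, writing $\pi(u_2^n)=\prod_{i\ge2}q_{\rm lat}(u_i|s_i)$, we get for all $x_1$ and $u_2^n$
\begin{equation*}
\sum_{u_1}\nu(u_1,u_2^n)\,q_{\rm r}(x_1|u_1,s_1)=q_0(x_1|s_1)\,\pi(u_2^n).
\end{equation*}

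\emph{Swapping roles.} Now fix $u_2^n$ and regard $\nu(\cdot,u_2^n)$ as a nonnegative measure on $\mathcal U$. If $\pi(u_2^n)=0$, the same zero-mass argument gives $\nu(\cdot,u_2^n)\equiv0$. Otherwise $\nu(\cdot,u_2^n)/\pi(u_2^n)$ lies in $\mathcal P_{s_1}(q_0;q_{\rm r})$. By UP it equals $q_{\rm lat}(\cdot|s_1)$, which yields~\eqref{eq:latent-product}. This holds for every $m$ and every $s^n$ with $q_0(s^n)>0$, since every $s_i$ is then relevant.

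\emph{Main obstacle.} The main subtlety is that UP is only a one-letter statement about probability vectors. It is not an injectivity or full-rank property, so it does not tensorize by linear algebra alone. The argument works only because nonnegativity is preserved when partial outputs are conditioned out, and mass is determined by the output. Both facts follow from the renderer being a stochastic kernel, and I will state them carefully.

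For general alphabets, the same steps go through with kernels and measures. Nonnegativity and total mass are handled the same way; pointwise statements hold for $q_0$-almost every $s^n$, and one works with measurable sets $A$ in place of points $x_1$.
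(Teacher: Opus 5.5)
Your proof is correct, but it takes a different route from the paper's.

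\textbf{The paper's route.} The paper first turns UP into a rank statement. Suppose a nonzero $z$ supported on $\mathcal U_s=\operatorname{supp}q_{\rm lat}(\cdot|s)$ satisfied $\sum_u z(u)q_{\rm r}(x|u,s)=0$ for all $x$. Then $\sum_u z(u)=0$, and $q_{\rm lat}(\cdot|s)\pm\epsilon z$ would be two distinct probability preimages of $q_0(\cdot|s)$. So $W_s$ has full row rank on $\mathcal U_s$. Next, UP pins each one-letter marginal $p(u_i|s^n,m)\in\mathcal P_{s_i}(q_0;q_{\rm r})$ to $q_{\rm lat}(\cdot|s_i)$, which confines the key-averaged block latent to $\prod_i\mathcal U_{s_i}$. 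The proof finishes by injectivity of the Kronecker product of the $W_{s_i}$ on that product support.

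\textbf{Your route.} You never leave the simplex. Conditioning on $x_1$ and renormalizing gives probability preimages of shorter product laws, and the final swap applies UP directly to the $u_1$-slices.

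\textbf{What each buys.} Your argument is more elementary and works with UP purely as a statement about probability vectors. It is also the version that extends to kernels on general alphabets, where Kronecker rank has no direct analogue. In that setting the swap step needs a disintegration of $\nu$ over $u_2^n$, plus a countable generating class to handle all sets at once. The paper's argument instead extracts a structural fact: UP forces the renderer to be distributionally injective on distributions supported on the pretrained latent support. This ties UP to the full-row-rank sufficient condition stated after Def.~\ref{def:up}, and it gives a one-shot rather than inductive proof.

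\textbf{A small correction.} Your closing claim that UP ``does not tensorize by linear algebra alone'' is overstated. Positivity is needed only in the perturbation and support-confinement steps, and after that linear algebra does the rest.
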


Lem.~\ref{lem:up-latent} shows that, under UP, every exact code induces the pretrained latent distribution $q_{\rm lat}$ for each message and context sequence. Thus exact output preservation also enforces exact latent-distribution preservation, which enables the strengthened converse in the next subsection. The proof is given in App.~\ref{app:up}.

\subsection{Capacity Region and Explicit Latent Preservation}

\begin{theorem}[Finite-alphabet rate--key capacity region under UP]
\label{thm:up}
Suppose $q_{\rm r}$ satisfies the UP condition relative to $q_0$. Then the capacity region is the closure of the inner bound in Thm.~\ref{thm:finite}.
\end{theorem}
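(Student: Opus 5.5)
The plan is to take achievability directly from the inner bound of Thm.~\ref{thm:finite}, and to spend the effort on a strengthened converse. The inner bound and its convex hull are achievable without UP, so it suffices to show the following: every achievable $(R,R_k)$ satisfies $R\le I(V;X)-I(V;S)$ and $R_k\ge R+I(V;S,U|X)$ for a single $p\in\mathscr P(q_0;q_{\rm r})$. The point of the converse is to replace $I(V;S|X)$ in~\eqref{eq:outer-key} by $I(V;S,U|X)$. This is where Lem.~\ref{lem:up-latent} enters. Under UP, for every $m$ the pair $(S^n,U^n)$ is i.i.d.\ $q_0(s)q_{\rm lat}(u|s)$, and hence $(S^n,U^n,X^n)$ is i.i.d.\ $q_{\rm pre}$ and independent of $M$.

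For the key bound I would write $(M,K)$ as the total encoder input and use the Markov chain $(M,K)-(S^n,U^n)-X^n$. This gives
\begin{align*}
I(M,K;S^n,U^n)&=I(M,K;X^n)+I(M,K;S^n,U^n|X^n).
\end{align*}
The left side is at most $I(M;S^n,U^n)+H(K)=H(K)\le nR_k$, because $M\perp(S^n,U^n)$ by Lem.~\ref{lem:up-latent}. On the right, $I(M,K;X^n)\ge I(M;X^n,K)\ge nR-n\epsilon_n$ by Fano, since $M\perp K$. Hence
\[
nR_k\ge nR-n\epsilon_n+I(M,K;S^n,U^n|X^n).
\]
It remains to single-letterize the last term into $\sum_i I(V_i;S_i,U_i|X_i)$. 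Because the triples $(S_i,U_i,X_i)$ are i.i.d., conditioning on the other coordinates can only reduce the conditional entropy of $(S_i,U_i)$ below $H(S_i,U_i|X_i)$, which yields a lower bound of the form
\[
\textstyle\sum_i\big[H(S_i,U_i|X_i)-H(S_i,U_i|V_i,X_i)\big].
\]

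For the rate bound I would use the usual Gel'fand--Pinsker converse with the Csiszár sum identity, exactly as in App.~\ref{app:finite-conv}, but with $(S,U)$ playing the role of the state. The same auxiliary must serve both bounds, so my first choice is
\[
V_i=(M,K,X^{i-1},S_{i+1}^n,U_{i+1}^n).
\]
This $V_i$ gives $nR\le\sum_i[I(V_i;X_i)-I(V_i;S_i,U_i)]+n\epsilon_n$, which is at most $\sum_i[I(V_i;X_i)-I(V_i;S_i)]+n\epsilon_n$. For the key term I would expand $I(M,K;S^n,U^n|X^n)$ backward over $(S_i,U_i)$. I would then apply the Csiszár sum identity a second time, trading $X_{i+1}^n$ for $S_{i+1}^n,U_{i+1}^n$, and use the i.i.d.\ property of $(S,U,X)$ to reach $\sum_i I(V_i;S_i,U_i|X_i)$ with the same $V_i$.

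A time-sharing variable $Q$ is then absorbed into $V=(V_Q,Q)$. This is legitimate because $(S_Q,U_Q,X_Q)\sim q_{\rm pre}$ is independent of $Q$, so $V-(S,U)-X$ holds and the induced $p$ lies in $\mathscr P(q_0;q_{\rm r})$. The cardinality bound follows as in Thm.~\ref{thm:finite}. Letting $\epsilon_n\to0$ and taking closure completes the argument.

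I expect the main obstacle to be the matching of auxiliaries: getting both the payload bound and the strengthened key bound with one common $V_i$. If the backward chain rule leaves residual terms such as $I(X^{i-1};S_i,U_i\,|\,\cdot)$ that do not cancel, my fallback is to bound the sum rate $R+(\text{key term})$ jointly through $I(M,K;S^n,U^n)$, and to compare with the inner region in its Fourier--Motzkin form~\eqref{eq:resource-cover}--\eqref{eq:resource-synthesis}. That comparison would exhibit an $\hat R$ directly, instead of matching the two inequalities separately.
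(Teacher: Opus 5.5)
\paragraph{Gap in the payload bound.} Your key accounting is correct and essentially the paper's. The payload bound, however, fails because of your auxiliary $V_i=(M,K,X^{i-1},S_{i+1}^n,U_{i+1}^n)$.

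First, the step $nR\le\sum_i[I(V_i;X_i)-I(V_i;S_i,U_i)]+n\epsilon_n$ cannot hold. The Markov chain $V_i-(S_i,U_i)-X_i$ makes every summand nonpositive, so this step would force $R=0$. A Gel'fand--Pinsker bound with ``state'' $(S,U)$ needs $I(M,K;S^n,U^n)=0$. But $U^n$ is the encoder's channel input, not a state. Lem.~\ref{lem:up-latent} gives independence from $M$ only, and in fact $I(M,K;S^n,U^n)\ge I(M,K;X^n)\ge nR-n\epsilon_n$.

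Second, the weaker inequality you actually need, $nR\le\sum_i[I(V_i;X_i)-I(V_i;S_i)]+n\epsilon_n$, is also false for this $V_i$. The future inputs $U_{i+1}^n$, together with $K$, can reveal $S_i$ through noncausal encoding. Here is a counterexample:
\begin{itemize}
\item Take $X=U\oplus Z$ with $Z\sim\mathrm{Bern}(\delta)$, and a uniform binary context that the renderer ignores. UP holds because the BSC is injective.
\item Use the code $U_{2j-1}=c_j(M)\oplus K_{2j-1}$ and $U_{2j}=S_{2j-1}\oplus K_{2j}$, with i.i.d.\ uniform key bits and $c$ a good BSC code of length $n/2$.
\item This code is exact and reliable, with $R\to\frac12(1-h_2(\delta))$.
\item At odd $i$, $S_i=U_{i+1}\oplus K_{i+1}$ is a function of $V_i$, so $I(V_i;X_i)-I(V_i;S_i)=(1-h_2(\delta))-1$.
\item At even $i$ the difference is $0$, so the sum is $-\frac n2 h_2(\delta)<nR$.
\end{itemize}

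\paragraph{Fix.} The repair is simpler than the obstacle you anticipated. Drop $U_{i+1}^n$ and use the same auxiliary as in App.~\ref{app:finite-conv}, namely $V_i=(M,K,X^{i-1},S_{i+1}^n)$. The payload bound is then the unchanged Gel'fand--Pinsker bound with state $S$ alone, which is valid because $I(M,K;S^n)=0$. Your own ``conditioning reduces entropy'' step then single-letterizes the key term directly for this $V_i$: $H(S_i,U_i|M,K,X^n,S_{i+1}^n,U_{i+1}^n)\le H(S_i,U_i|V_i,X_i)$, since the left side additionally conditions on $X_{i+1}^n$ and $U_{i+1}^n$. You need neither a second Csisz\'ar identity nor the Fourier--Motzkin fallback. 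With this change your argument coincides with the paper's converse.
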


Thm.~\ref{thm:finite} already gives achievability. Lem.~\ref{lem:up-latent} upgrades exact preservation of the public product distribution to exact preservation of the key-averaged latent product distribution. Details are given in App.~\ref{app:up}.

We note that the random-projection example showing that the outer-bound is loose in Rem.~\ref{rem:finite-converse-tightness} does not satisfy the UP condition. Prop.~\ref{prop:strict-converse-example} also contains a distributionally injective specialization for which the original outer bound in Thm.~\ref{thm:finite} remains loose; thus UP enables the strengthened converse here rather than making the original outer bound tight.

\begin{corollary}[Explicit latent preservation]
\label{cor:latent-preserving}
Relative to our primary problem, consider the narrower problem that additionally requires~\eqref{eq:latent-product} with the pretrained latent law $q_{\rm lat}(u|s)$ for every message and relevant context sequence $s^n$. For any fixed renderer, not necessarily satisfying UP, the rate--key capacity region is the closure of the inner bound in Thm.~\ref{thm:finite}, restricted to choices satisfying $\sum_v p(v,u|s)=q_{\rm lat}(u|s)$. Its capacity is given by~\eqref{eq:up-capacity} under the same marginal and key-budget constraints.
\end{corollary}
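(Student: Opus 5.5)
Both directions rest on the machinery already used for Thm.~\ref{thm:finite} and Thm.~\ref{thm:up}. The strengthened converse of Thm.~\ref{thm:up} only needs the key-averaged latent product law~\eqref{eq:latent-product}, not the UP condition itself. In Thm.~\ref{thm:up}, Lem.~\ref{lem:up-latent} derives~\eqref{eq:latent-product} from exact output preservation. In the narrower problem of the corollary, \eqref{eq:latent-product} is imposed by definition. So the plan is to rerun the converse of App.~\ref{app:up} with that lemma replaced by the explicit constraint. Achievability is then checked by restricting the construction of App.~\ref{app:finite-ach} to latent-preserving single-letter distributions.

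\emph{Achievability.} Fix $p(v,u|s)$ with $\sum_v p(v,u|s)=q_{\rm lat}(u|s)$. Then $\sum_u p(u|s)q_{\rm r}(x|u,s)=q_0(x|s)$ by~\eqref{eq:q0}, so $p\in\mathscr P(q_0;q_{\rm r})$. In the likelihood-encoding/soft-covering construction, the coordination condition~\eqref{eq:resource-synthesis} makes the key-and-covering-averaged pair $(S^n,U^n)$ close in total variation to $\prod q_0(s_i)q_{\rm lat}(u_i|s_i)$ for each message. The encoder-local maximal-coupling step then corrects this to exact equality. I will apply that correction at the latent level rather than the output level: couple the synthesized $U^n$ with an exact draw from $\prod q_{\rm lat}(u_i|s_i)$, and release the exact draw. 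This yields~\eqref{eq:latent-product} exactly, and~\eqref{eq:exact-output} follows because the renderer is memoryless. Since the coupling changes $U^n$ with vanishing probability, decoding reliability is preserved. Coded time sharing gives the convex hull. If the App.~\ref{app:finite-ach} proof already performs the coupling on $(S^n,U^n)$, this step is immediate.

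\emph{Converse.} Start from Fano's inequality, $nR\le I(M;X^n,K)+n\epsilon_n=I(M;X^n|K)+n\epsilon_n$, using $M\perp K$. Next, use $M\perp X^n$ (from exact preservation) to get $nR\le nR_k+n\epsilon_n$ and to bound the key via the $H(K)$ chain. Then perform the Csisz\'ar-sum single-letterization with $V_i=(M,K,S^{i-1},X_{i+1}^n)$ (or the variant used in App.~\ref{app:up}), followed by a uniform time-sharing variable $Q$ absorbed into $V$. The payload bound comes out as $I(V;X)-I(V;S)$. The key bound should be $R_k\ge R+I(V;S,U|X)-\epsilon_n$, i.e.\ of the form~\eqref{eq:inner-key} rather than the weaker~\eqref{eq:outer-key}. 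To obtain it, evaluate $H(S^n,U^n|M)=nH(S,U)$ under the pretrained law, exactly as Lem.~\ref{lem:up-latent} is used in App.~\ref{app:up}. The exact product structure of $(S^n,U^n,X^n)$ given each $m$, namely $\prod q_0(s_i)q_{\rm lat}(u_i|s_i)q_{\rm r}(x_i|u_i,s_i)$, is what lets the term $H(U^n|S^n,X^n,M)$ be single-letterized.

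Finally, check that the single-letter distribution satisfies $p(u|s)=q_{\rm lat}(u|s)$. Averaging the $i$th-letter law over $V_i$ and $Q$ returns the marginal of the product law, so the restriction holds. The cardinality bound follows as in Thm.~\ref{thm:finite}, with the additional marginal constraints being fixed linear constraints preserved by Carathéodory. The capacity formula~\eqref{eq:up-capacity} then follows by the same optimization under the restricted class. The main obstacle is confirming that the App.~\ref{app:up} converse uses only~\eqref{eq:latent-product} and never the uniqueness of the preimage, e.g.\ in identifying single-letter marginals. I would audit each step for where UP enters, and replace those uses with the imposed constraint.
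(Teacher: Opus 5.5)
Your proposal is correct and follows essentially the same route as the paper. For achievability, the App.~\ref{app:finite-ach} construction, whose maximal-coupling exactification already acts on the latent conditionally on $(s^n,m)$, is run with a latent-preserving $p(v,u|s)$; for the converse, App.~\ref{app:up} is rerun with the full product law~\eqref{eq:up-full-product} now supplied by the imposed constraint~\eqref{eq:latent-product} and the memoryless renderer instead of by Lem.~\ref{lem:up-latent}, after which the key-accounting, single-letterization, and erasure steps carry over unchanged (UP enters App.~\ref{app:up} only through that lemma, and your reversed auxiliary $V_i=(M,K,S^{i-1},X_{i+1}^n)$ works as well as the paper's $(W,X^{i-1},S_{i+1}^n)$).
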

\begin{proof}
The latent-synthesis and encoder-local exactification in App.~\ref{app:finite-ach} achieve the prescribed latent product distribution. For the converse, the product distribution~\eqref{eq:up-full-product} used in App.~\ref{app:up} now follows directly from the imposed latent constraint and the memoryless renderer, rather than from UP. The remaining key-accounting, single-letterization, and auxiliary-erasure arguments are unchanged.
\end{proof}
This additional operational constraint gives a direct interpretation of distribution-preserving latent sampling~\cite{Yang2024GaussianShading}. It preserves the prescribed input statistics even when the output distribution does not identify them.

\subsection{Equivalent Capacity Characterization}

The capacity admits a simpler equivalent form. Under UP, the latent marginal is fixed to $q_{\rm lat}$. Moreover, any admissible auxiliary with $I(V;U|S)>R_k$ can be independently erased to reduce its latent-coordination cost to $R_k$ without decreasing the best feasible payload; see App.~\ref{app:up}. Consequently,
\begin{equation}
C(R_k)
=
\max_{p(v,u|s)}
\bigl[I(V;X)-I(V;S)\bigr].
\label{eq:up-capacity}
\end{equation}
The maximization is over conditional pmfs $p(v,u|s)$ satisfying
\[
\sum_v p(v,u|s)=q_{\rm lat}(u|s),
\qquad
I(V;U|S)\le R_k.
\]

Full physical-latent identification is stronger than necessary for the Gaussian model below. There, the output identifies the renderer-visible component, and the remaining physical latent can be supplied privately.

\section{Jointly Gaussian Latent-Space Watermarking}
\label{sec:gaussian}

\draftEightBegin
We now extend the fixed-renderer model to the general jointly Gaussian vector setting. Classical scalar and vector Gaussian watermarking games characterize embedding under covertext modification and attack-distortion constraints~\cite{CohenLapidoth2002Gaussian,CohenLapidoth2001Vector}. Our Gaussian problem has a different architecture: the latent-to-output renderer is fixed, the released sample must preserve its prescribed conditional distribution exactly, and the shared secret key has a finite entropy rate. These constraints make the renderer-visible component of the latent, rather than the full physical latent itself, the natural Gaussian object. We begin by identifying that component.
\draftEightEnd

\subsection{Renderer-Visible Innovation}
\label{sec:gaussian-visible}

Let the pretrained-system joint law $q_{\rm pre}(s,u,x)$ be jointly Gaussian, with $S\in\mathbb R^{d_s}$, $U\in\mathbb R^{d_u}$, and $X\in\mathbb R^d$, and assume the nondegenerate renderer residual covariance
$\Sigma_{X|S,U}\succ0.$
To define the conditional means and covariance matrices associated with the renderer, we first consider the pretrained-system joint distribution $q_{\rm pre}$, which characterizes the distribution under which the renderer was trained. Define
\begin{equation}
T
\triangleq
\EE[X|S,U]-\EE[X|S].
\label{eq:visible-innovation}
\end{equation}
Thus, $T$ captures the renderer-visible effect of the latent beyond the context $S$.

Define also the irreducible renderer residual
\begin{equation}
Z
\triangleq
X-\EE[X|S,U].
\label{eq:renderer-residual}
\end{equation}
Then
\begin{equation}
X=\EE[X|S]+T+Z.
\label{eq:gaussian-decomposition}
\end{equation}
For jointly Gaussian variables, $T\perp S$ and $Z\perp(S,U)$. Moreover,
\begin{equation}
\Sigma_T
=
\Sigma_{X|S}-\Sigma_{X|S,U},
\quad
\Sigma_Z
=
\Sigma_{X|S,U}.
\label{eq:gaussian-covariances}
\end{equation}
Thus, once $S$ is fixed, changing $U$ can alter the conditional output law only through the conditional mean, and $T$ records exactly this renderer-visible effect.

A linear Gaussian example makes the interpretation concrete.  Let
\begin{equation}
X=AS+BU+Z,
\label{eq:linear-gaussian}
\end{equation}
where $A$ and $B$ are deterministic matrices,  $Z$ is the independent Gaussian renderer noise, and $U\perp S$ for simplicity.  Then $\EE[X|S]=AS$ and $\EE[X|S,U]=AS+BU$, so
\begin{equation}
T=BU.
\label{eq:TBU}
\end{equation}
The renderer exposes the effect $BU$, not the raw physical latent $U$.  That is, watermark information is useful only through latent variation that the pretrained mapping makes visible at the output.  Latent directions in the null space $\ker B$ are annihilated by $B$: they cannot change the released sample and therefore cannot carry watermark information to a detector that observes the output.

This characterization suggests a practical inductive bias for Gaussian latent-space watermarking applications; see, e.g., \cite{Yang2024GaussianShading,Yang2026FARI}. Distribution-preserving latent sampling can maintain the generator's ordinary input statistics, but reliable watermark recovery additionally requires embedding information in latent variation that remains visible through the fixed generator. In the Gaussian model, $T$ identifies exactly this renderer-visible component. Thus the theory suggests favoring latent representations or directions that are strongly expressed at the generated output, rather than treating all distribution-preserving latent variation as equally useful.

\subsection{Reduction to the Effective Gaussian Renderer}

Equation~\eqref{eq:gaussian-decomposition} and Gaussianity imply
\begin{equation}
q_{\rm pre}(x|s,u)=q_{\rm pre}(x|s,t),
\label{eq:effective-markov}
\end{equation}
or equivalently $U-(S,T)-X$.  Once $(S,T)$ is given, the remaining variation in $U$ does not affect the renderer output distribution.

\begin{lemma}[Effective-latent reduction]
\label{lem:effective-latent}
Let $q_{\rm pre}(s,u,x)$ be the jointly Gaussian pretrained-system law. Any effective-latent code producing $T^n(S^n,M,K)$, with each $(S_i,T_i)$ lying almost surely in the support of $(S,T)$, can be lifted to a physical-latent code $U^n$ that induces the same output distribution as the effective-latent code and has the same decoding performance. The lift uses only encoder-private randomness and consumes no additional secret key. Hence, the reduction to $T$ is operationally lossless.
\end{lemma}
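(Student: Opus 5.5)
The plan is to lift each effective-latent symbol to a physical latent using the pretrained conditional law of $U$ given $(S,T)$, drawn privately and symbol by symbol. Concretely, let $q_{\rm pre}(u|s,t)$ be the Gaussian conditional law of $U$ given $(S,T)=(s,t)$ under $q_{\rm pre}$. It can be written as a measurable kernel: $\EE[U|S,T]$ is affine in $(S,T)$ and the conditional covariance $\Sigma_{U|S,T}$ is constant. Given the effective-latent codeword $T^n(S^n,M,K)$, the encoder draws independent private Gaussian vectors $W_i\sim\mathcal N(0,\Sigma_{U|S,T})$, independent of $(S^n,M,K)$, and sets $U_i=\EE[U|S=S_i,T=T_i]+W_i$. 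This is a legitimate stochastic encoder $p(u^n|s^n,m,k)$ in the sense of~\eqref{eq:watermark-latent-encoder}. Since the randomness $W^n$ is local to the encoder, no additional key is used.

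The key step is checking that the lifted code induces, conditionally on $(S^n,T^n,M,K)$, exactly the output law the effective-latent code prescribes. The effective code's output law is $\prod_i q_{\rm pre}(x_i|s_i,t_i)$, i.e.\ $X_i=\EE[X|S_i]+T_i+Z_i$ with fresh renderer noise. For the lifted code, the renderer is memoryless by~\eqref{eq:renderer-product}, so
\[
p(x^n|s^n,t^n,m,k)=\prod_i\int q_{\rm pre}(u_i|s_i,t_i)\,q_{\rm r}(x_i|u_i,s_i)\,du_i .
\]
The lift draws each $U_i$ from $q_{\rm pre}(\cdot|s_i,t_i)$, so each factor is the pretrained $(S,T)\to X$ conditional. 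By the Markov chain~\eqref{eq:effective-markov}, that factor equals $q_{\rm pre}(x_i|s_i,t_i)$.

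I also need the lifted latent to reproduce the specified $t_i$, not merely some value with the same law. The reason is that $T$ is an affine function of $(S,U)$, namely $T=\EE[X|S,U]-\EE[X|S]$. Under $q_{\rm pre}(u|s,t)$ we have $\EE[X|s,U]-\EE[X|s]=t$ almost surely, so the identity is exact. Once the conditional laws given $(S^n,M,K)$ coincide, the joint law of $(S^n,M,K,X^n)$ is identical under both codes. Two consequences follow. First, the induced output distribution~\eqref{eq:induced-output-distribution} is unchanged, so exact preservation~\eqref{eq:exact-output} transfers. Second, the same detector $g_n(X^n,K)$ attains the same error probability, because that probability depends only on the law of $(M,K,X^n)$.

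I expect the main obstacle to be the support and measure-theoretic details. The kernel $q_{\rm pre}(u|s,t)$ is defined only $q_{\rm pre}$-almost everywhere, but the code may place $(S_i,T_i)$ on a null set of the pretrained joint law. This is exactly why the hypothesis requires $(S_i,T_i)$ to lie in the support of $(S,T)$. I will handle it by using the Gaussian version of the kernel, which is affine mean plus fixed covariance and so is defined for every point of the support, an affine subspace. I then need to verify that the identity~\eqref{eq:effective-markov} holds pointwise on that support, not just almost everywhere. Both facts follow from the explicit forms: $X=\EE[X|S]+T+Z$ with $Z\perp(S,U)$ and $\Sigma_Z\succ0$. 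Degenerate covariances, such as a singular $\Sigma_T$, are handled by working on the range spaces.
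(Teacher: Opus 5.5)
Your proposal is correct and follows the paper's proof. Like the paper, you lift $T^n$ through the product kernel $\prod_i q_{\rm pre}(u_i|s_i,t_i)$ using only private randomness, and observe that on the support of $(S,T)$ the lift composed with the renderer reproduces $\mathcal N(\mu(s_i)+t_i,\Sigma_Z)$, so the joint law of $(S^n,M,K,X^n)$ and the detector's error probability are unchanged. Two small remarks: your explicit Gaussian (affine-mean, constant-covariance) version of the kernel, checked pointwise on the support, handles the null-set issue more carefully than the paper's unqualified ``fix a regular conditional distribution''; and the paper also records the easy reverse direction (a physical code yields an effective code, since $T_i$ is a deterministic function of $(S_i,U_i)$ and the renderer depends on $U_i$ only through $T_i$), which you should add to justify ``operationally lossless'' in both directions.
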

App.~\ref{app:gaussian} gives the block deconvolution and lifting details.

\subsection{The Scalar Case: Context Adaptation and Coordination}
\label{sec:gaussian-scalar}

We first consider the scalar case to develop intuition, and then extend the analysis to the vector setting in Sec.~\ref{sec:gaussian-vector}. Consider the scalar renderer
\begin{equation}
X=aS+U+Z,
\label{eq:scalar-gaussian}
\end{equation}
where $U\sim\mathcal N(0,\sigma_U^2)$ and $Z\sim\mathcal N(0,\sigma_Z^2)$ are independent of each other and of the Gaussian context $S$. Here $T=U$ and $\gamma=\sigma_U^2/\sigma_Z^2$. We assume $\sigma_U^2>0$ and $\sigma_Z^2>0$ to avoid degenerate cases. The zero-visible-variance case has zero capacity and is handled separately. We now analyze the scalar Gaussian case in parallel with the finite-alphabet treatment of Sec.~\ref{sec:finite}, exposing the corresponding coding and coordination requirements.

A Gaussian auxiliary performs two distinct operations:
\begin{equation}
V=U+\alpha aS+N_V,\quad
\alpha=\frac{\gamma}{1+\gamma},
\label{eq:scalar-auxiliary}
\end{equation}
where the independent Gaussian noise has variance $\sigma_{N_V}^2=\sigma_U^2/(2^{2r}-1)$ for $r>0$. The affine term adapts to the hidden context in the Costa sense~\cite{Costa1983DirtyPaper}. Indeed, $V-\alpha X=(1-\alpha)U-\alpha Z+N_V$ is independent of $(S,X)$ by Gaussianity and the choice of $\alpha$, so $I(V;S|X)=0$. Separately, the noise level sets the coordination cost $r=I(V;U|S)$.

Define
\begin{equation}
f_\gamma(r)=\frac12\log\frac{1+\gamma}{1+\gamma2^{-2r}}.
\label{eq:scalar-information-curve}
\end{equation}
Gaussian evaluation then gives
\begin{align}
I(V;X)-I(V;S)&=I(V;X|S)=f_\gamma(r),\\
I(V;U|S,X)&=r-f_\gamma(r).
\end{align}
The $r=0$ endpoint uses a constant auxiliary. Recall from~\eqref{eq:excess-multicoding} that $t$ denotes the excess auxiliary-codebook rate above the covering requirement. The resource constraints~\eqref{eq:net-packing}--\eqref{eq:net-coordination} then become $R+t<f_\gamma(r)$ and $R_k+t>r$, with $t>0$. Taking closure and eliminating $t$ gives
\begin{equation}
R\le f_\gamma(r)-[r-R_k]^+.
\label{eq:scalar-resource-ceiling}
\end{equation}

The derivative reveals which representation to choose:
\begin{equation}
0<f_\gamma'(r)=\frac{\gamma2^{-2r}}{1+\gamma2^{-2r}}<1.
\label{eq:scalar-slope}
\end{equation}
Below the available key budget, more coordination improves the payload ceiling. Above it, every extra coordination bit consumes one bit of packing through $t$ but gains less than one bit in that ceiling. Thus~\eqref{eq:scalar-resource-ceiling} is maximized at $r=R_k$, giving
\begin{equation}
C(R_k)=\frac12\log\frac{1+\gamma}{1+\gamma2^{-2R_k}}.
\label{eq:scalar-capacity}
\end{equation}
The Gaussian converse in Thm.~\ref{thm:gaussian} below establishes that this achieved value is capacity. The optimum has $t=0$ in closure; the context-adaptation part $I(V;S)$ of multicoding need not vanish. The endpoints are $C(0)=0$ and $C(\infty)=\frac12\log(1+\gamma)$.

This separates adaptation to the state from selection of the amount of latent detail. A richer representation is not always better under finite key. Across several modes, the remaining question is where that coordination is most productive.

\subsection{Vector Gaussian Finite-Key Capacity}
\label{sec:gaussian-vector}

Define the covariance of the whitened renderer-visible latent $T$ as
\begin{equation}
\Gamma
\triangleq
\Sigma_Z^{-1/2}\Sigma_T\Sigma_Z^{-1/2}
=
\Sigma_{X|S,U}^{-1/2}
\bigl(\Sigma_{X|S}-\Sigma_{X|S,U}\bigr)
\Sigma_{X|S,U}^{-1/2}.
\label{eq:Gamma}
\end{equation}
Let $\gamma_1,\ldots,\gamma_d$ denote the eigenvalues of $\Gamma$.  After whitening the renderer noise and diagonalizing $\Gamma$, the eigenvectors define orthogonal renderer-visible directions.  We refer to the resulting decoupled scalar Gaussian subchannels as \emph{renderer-visible modes}; the corresponding eigenvalue $\gamma_j$ is the effective SNR of mode $j$.

\begin{theorem}[Jointly Gaussian finite-key capacity]
\label{thm:gaussian}
Let $S\in\mathbb R^{d_s}$, $U\in\mathbb R^{d_u}$, and $X\in\mathbb R^d$ have the jointly Gaussian pretrained-system law $q_{\rm pre}$ with $\Sigma_{X|S,U}\succ0$, $\gamma_1,\ldots,\gamma_d$ be the eigenvalues of the renderer-visible matrix $\Gamma$ in~\eqref{eq:Gamma}, and $r_j\ge0$ denote the secret-key rate allocated to renderer-visible mode $j$. Then
\begin{equation}
C(R_k)
=
C_{\ctx}(R_k)
=
\max_{\substack{r_j\ge0\\\sum_{j=1}^d r_j=R_k}}
\sum_{j=1}^d
\frac12\log
\frac{1+\gamma_j}
{1+\gamma_j2^{-2r_j}}.
\label{eq:gaussian-capacity}
\end{equation}
\end{theorem}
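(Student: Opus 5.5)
Since $C(R_k)\le C_{\ctx}(R_k)$ trivially (a detector that ignores $S^n$ is admissible in the benchmark), the theorem follows from an achievability bound for the context-hidden problem and a converse for the context-aware one. First, Lem.~\ref{lem:effective-latent} replaces $U$ by the renderer-visible innovation $T$, so the effective channel is $X=\EE[X|S]+T+Z$ with $T\perp S$, $Z\perp(S,T)$. Then I whiten by $\Sigma_Z^{-1/2}$ and rotate by the eigenvector matrix of $\Gamma$. This gives $d$ decoupled scalar channels $\tilde X_j=c_j(S)+\tilde T_j+\tilde Z_j$ with independent $\tilde T_j\sim\mathcal N(0,\gamma_j)$ and $\tilde Z_j\sim\mathcal N(0,1)$, where $c_j$ is linear in $S$. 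Modes with $\gamma_j=0$ contribute zero and are dropped. Since these invertible transformations are known to encoder and detector and preserve the Gaussian product law, exact preservation and decoding carry over unchanged.

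\emph{Achievability.} Fix an allocation $(r_j)$ with $\sum_j r_j=R_k$. On each mode I use the scalar Costa-type auxiliary~\eqref{eq:scalar-auxiliary}, $V_j=\tilde T_j+\alpha_j c_j(S)+N_{V_j}$ with $\alpha_j=\gamma_j/(1+\gamma_j)$ and noise variance $\gamma_j/(2^{2r_j}-1)$, independent across modes, and set $V=(V_1,\dots,V_d)$. Then $I(V;S|X)=0$ mode by mode, $I(V;T|S)=\sum_j r_j=R_k$, and $I(V;X)-I(V;S)=\sum_j f_{\gamma_j}(r_j)$. Because $T$ has full Gaussian support, the marginal of $T$ given $S$ equals its pretrained law, so $p$ lies in $\mathscr P$. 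I then invoke the inner bound of Thm.~\ref{thm:finite} for the effective latent, applied at $t=0$ in closure. This invocation requires extending the finite-alphabet likelihood-encoding and exactification to Gaussian alphabets, by quantization plus a continuity argument or by a direct Gaussian soft-covering lemma. I expect this extension to be routine but technical. The lift back to $U$ uses Lem.~\ref{lem:effective-latent} and costs no key.

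\emph{Converse (main obstacle).} Work with the context-aware detector. Since $T$ is identified from $X$ given $S$ by Gaussian deconvolution, the block analogue of Lem.~\ref{lem:up-latent} holds for $T$: exactness forces $p(t^n|s^n,m)=\prod q(t_i)$. The strengthened UP-type converse then gives $nR\le I(M;X^n|S^n,K)+n\epsilon_n$, with the key entering through $I(M,K;T^n|S^n)\le n(R+R_k)$. More precisely, I plan to show
\[
nR\le I(W;X^n|S^n)-\bigl[I(W;T^n|S^n)-nR_k\bigr]^+ + n\epsilon_n,\qquad W=(M,K).
\]
Given $S^n$, this is a Gaussian problem in which $T^n$ is i.i.d.\ with the fixed covariance $\Sigma_T$ and the output is $T^n+Z^n$. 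The task is then to maximize $I(W;T^n+Z^n)$ subject to $I(W;T^n)\le n\rho$, which is the Gaussian information-bottleneck / source-coding-with-noisy-observation problem. Its solution is jointly Gaussian and decouples over the eigenmodes of $\Gamma$, giving $\max\sum_j f_{\gamma_j}(r_j)$ subject to $\sum_j r_j\le\rho$. I will establish this bound via the entropy power inequality or a conditional EPI applied per mode: for $W$ with $I(W;\tilde T_j)=r$, the EPI gives $h(\tilde T_j+\tilde Z_j|W)\ge\frac12\log(2\pi e(\gamma_j2^{-2r}+1))$, i.e.\ a Mrs.\ Gerber-type bound. Combining modes requires the chain rule together with concavity of $f_\gamma$. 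Finally, since $f_\gamma'<1$ by~\eqref{eq:scalar-slope}, exceeding $\rho=R_k$ never helps, and the Karush--Kuhn--Tucker reverse-water-filling yields~\eqref{eq:gaussian-capacity}.

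I expect the hardest step to be rigorously handling the non-product correlation across the block and modes in the EPI step. To address it, I would first single-letterize with a time-sharing variable $Q$, using that the $T_i$ are exactly i.i.d.\ given $S^n$.
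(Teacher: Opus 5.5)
Your proposal is correct and follows essentially the paper's route. For achievability, both use the effective-latent reduction, whitening into renderer-visible modes, and the per-mode Costa auxiliary degraded by independent Gaussian noise of variance $\gamma_j/(2^{2r_j}-1)$, followed by soft covering and maximal-coupling exactification. For the converse, both use characteristic-function deconvolution of $T^n$, Fano's inequality with $S^n$ revealed, and a per-mode conditional EPI chained across modes.

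Two simplifications are worth noting:
\begin{enumerate}
\item The identification already gives $T^n\perp(S^n,M)$, so $I(M,K;T^n|S^n)=I(K;T^n|S^n,M)\le H(K)=nR_k$ directly. Your $[\,\cdot\,]^+$ term is therefore identically zero, and no appeal to $f_\gamma'<1$ is needed.
\item No time-sharing single-letterization is required. The paper applies the $n$-dimensional conditional EPI to each mode after conditioning on the earlier modes' latent sequences. The domination $p(k)\mu_k\le G$ of each key-conditioned law by the key-averaged Gaussian law $G$ ensures that the conditional differential entropies are well defined.
\end{enumerate}
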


The theorem characterizes finite-key capacity and shows that revealing the context to the detector does not increase it. The condition $\Sigma_{X|S,U}\succ0$ excludes deterministic renderers and noiseless output directions. Such cases require separate support analysis; singular covariance alone does not imply infinite capacity, because a noiseless direction may contain no renderer-visible latent variation.

The proof is deferred to App.~\ref{app:gaussian}.

\subsection{Renderer-Visible Modes and Key Allocation}

In the linear example~\eqref{eq:TBU}, directions eliminated by $B$ generate zero renderer-visible modes and contribute no watermark rate.

At unlimited key,
\begin{equation}
C(\infty)
=
\frac12\log
\frac{\det\Sigma_{X|S}}{\det\Sigma_{X|S,U}}
=
I(U;X|S)
=
I(T;X|S).
\label{eq:gaussian-unlimited}
\end{equation}
\draftEightBegin
Thus unlimited secret key cannot make renderer-invisible latent variation useful. A higher key rate can coordinate visible modes more completely, but it cannot create an output effect in a latent direction that the pretrained renderer does not expose at the output.
\draftEightEnd

For finite key rate, the optimal allocation has the log-threshold form
\begin{equation}
r_j^\star
=
\frac12\left[\log\frac{\gamma_j}{\lambda}\right]^+,
\label{eq:key-allocation}
\end{equation}
for $\gamma_j>0$, with $r_j^\star=0$ when $\gamma_j=0$; $\lambda$ is selected so that the key constraint is active when appropriate.  The latent coordination budget from Sec.~\ref{sec:up} therefore acquires a geometric meaning: with scarce key, coordination is allocated first to the renderer-visible modes with the largest effective SNRs.

\subsection{Numerical Illustration of Renderer Geometry}

Figure~\ref{fig:gaussian-geometry} evaluates~\eqref{eq:gaussian-capacity} for two renderer-visible mode-SNR profiles with the same total effective SNR,
\begin{equation}
\boldsymbol\gamma_{\rm flat}=(2,2,2,2),
\quad
\boldsymbol\gamma_{\rm aniso}=(7,0.5,0.3,0.2),
\end{equation}
so that $\sum_j\gamma_j=8$ in both cases, and compares optimal key allocation with uniform allocation for the anisotropic renderer.  With scarce secret key, the optimally allocated anisotropic profile is superior because the coordination budget can be concentrated on its strongest mode.  The two optimally allocated curves cross at approximately $R_k=2.36$ bits/sample, after which the flat profile is superior.  At unlimited key,~\eqref{eq:gaussian-unlimited} gives approximately $3.17$ bits/sample for the flat profile and $2.11$ bits/sample for the anisotropic profile; under a fixed sum $\sum_j\gamma_j$, this large-key ordering follows from the concavity of $\log(1+\gamma)$.  Thus renderer anisotropy is neither uniformly beneficial nor uniformly harmful: it trades low-key efficiency against high-key aggregate watermark capacity.  The uniform-allocation curve further shows the cost of ignoring the renderer-induced mode geometry.

\begin{figure}[t]
\centering
\includegraphics[width=0.58\linewidth]{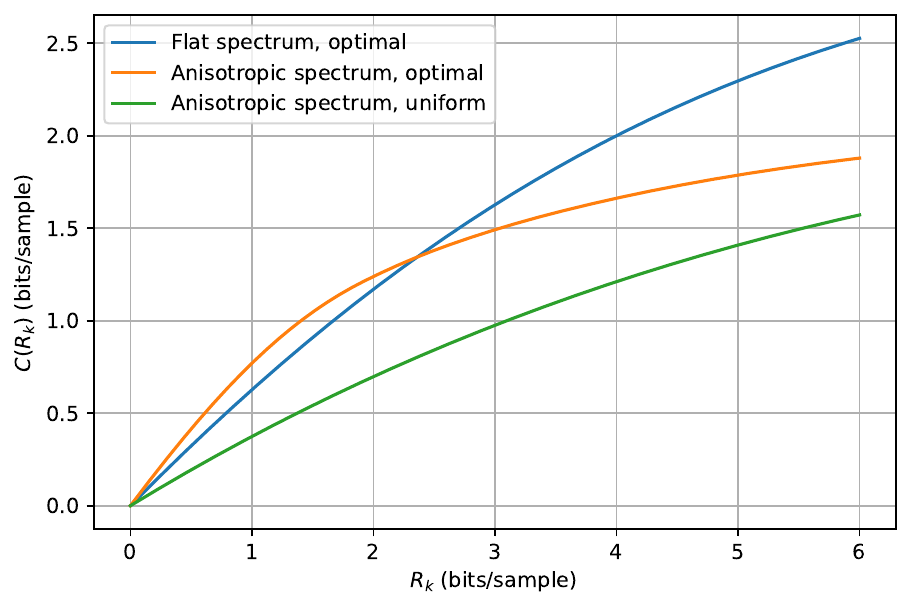}
\caption{Jointly Gaussian capacity versus secret-key rate for two renderer-visible mode-SNR profiles with equal total effective SNR $\sum_j\gamma_j=8$.  The anisotropic uniform-allocation curve uses $r_j=R_k/4$; the other two curves use the optimal allocation~\eqref{eq:key-allocation}.}
\label{fig:gaussian-geometry}
\end{figure}

\section{Robustness to Same-Generator Regeneration}
\label{sec:robust}

The preceding section studied how watermark information is retained through a generator. From a robustness perspective, the generator itself can already be viewed as a transformation that may attenuate or erase watermark-bearing variation. In this section, we take this viewpoint one step further and ask what happens when an adversary deliberately invokes the generator again to produce a fresh realization of the same underlying content. Would a watermark be robust against such attacks? We first formalize this regeneration attack and its reduction to an effective renderer, and then study compound robustness and repeated regeneration in the scalar Gaussian model.

\subsection{Regeneration Model and Preservation}
\label{sec:model-attack}

We now formulate the regeneration attack model. The attack is modeled as a state-dependent memoryless channel with the following properties.

The attacker maps a previously generated sample $X$, together with its semantic context $S$, to a new generation $Y$. In the regeneration attack considered here, this mapping is realized by reusing the pretrained generator. At blocklength $n$,
\begin{equation}
a(y^n|x^n,s^n)
=
\sum_{u_{\rm a}^n}
a(u_{\rm a}^n|x^n,s^n)
q_{\rm r}(y^n|u_{\rm a}^n,s^n),
\label{eq:attack-render-block}
\end{equation}
where $a(u_{\rm a}^n|x^n,s^n)$ specifies the attacker's choice of latent input. We assume this attack-latent mapping is memoryless,
\begin{equation}
a(u_{\rm a}^n|x^n,s^n)
=
\prod_{i=1}^n a(u_{{\rm a},i}|x_i,s_i).
\label{eq:attack-latent-block}
\end{equation}
Since the pretrained renderer is also memoryless, the resulting regeneration channel satisfies
$
a(y^n|x^n,s^n)
=
\prod_{i=1}^n a(y_i|x_i,s_i).
$
We allow the attack remapping to depend on the semantic context $S$, while the attack remains independent of the watermark message and secret key except through its permitted observations.

Under attack, the same watermark detector is applied to the regenerated sample and the secret key,
$\hat M=g_n(Y^n,K).$
Reliability under an admissible one-pass attack requires
$
\PP\{g_n(Y^n,K)\neq M\}\to0.
$
For compound robustness, the same watermark code and detector must satisfy this requirement uniformly over the specified attack family.

The attacked sample must itself remain a valid ordinary generation.  Operationally, for every watermark message $m$,
\begin{equation}
p(y^n|s^n,m)
=
\prod_{i=1}^n q_0(y_i|s_i)
\label{eq:attack-exact-output}
\end{equation}
for almost every $s^n$ under $q_0(s^n)$; on finite alphabets, equality is imposed whenever $q_0(s^n)>0$.  This restriction prevents the attacker from winning by simply destroying the sample: the regenerated output must continue to have the desired conditional output distribution.  Finally, each regeneration pass is restricted by the block-average fidelity constraint
\begin{equation}
\frac1n\sum_{i=1}^n \EE\bigl[d_{\rm a}(X_i,Y_i)\bigr]
\le D.
\label{eq:attack-distortion-model}
\end{equation}
We next derive the corresponding single-letter family and distinguish a fixed attack from uniform reliability over the family.

\subsection{Renderer-Realizable Regeneration and Fixed-Attack Reduction}

\begin{definition}[Renderer-realizable attack]
\label{def:renderer-realizable}
For each context $s$, let $A_s$ denote the transition operator of the attack kernel $a(y|x,s)$. On finite alphabets, $A_s$ is the stochastic matrix with entries $[A_s]_{x,y}=a(y|x,s)$; on continuous alphabets, the corresponding Markov-kernel interpretation is used. We write $A=(A_s)_s$ for the context-indexed attack. We say that $A_s$ is \emph{renderer-realizable} if there exists an attack-latent kernel $a(u_{\rm a}|x,s)$ such that
\begin{equation}
a(y|x,s)
=
\sum_{u_{\rm a}}
a(u_{\rm a}|x,s)q_{\rm r}(y|u_{\rm a},s).
\label{eq:renderer-realizable}
\end{equation}
\end{definition}

\draftEightBegin
Operationally, the attacker first maps the released sample $X$ to an attack latent $U_{\rm a}$ through $a(u_{\rm a}|x,s)$, after which the unchanged pretrained renderer $q_{\rm r}(y|u_{\rm a},s)$ produces $Y$. Equation~\eqref{eq:renderer-realizable} is the resulting single-letter composition of those two kernels and is precisely the channel induced by~\eqref{eq:attack-render-block}--\eqref{eq:attack-latent-block}.
\draftEightEnd

\begin{lemma}[Fixed-attack reduction]
\label{lem:attack-reduction}
Suppose the incoming watermarked sample satisfies~\eqref{eq:exact-output}.  For a memoryless, message- and key-independent attack $A$, the post-attack exact output requirement~\eqref{eq:attack-exact-output} is equivalent to
\begin{equation}
q_0(\cdot|s)A_s=q_0(\cdot|s)
\label{eq:q0-stationary}
\end{equation}
for every relevant $s$.  The end-to-end channel from $U$ to $Y$ is the effective renderer
\begin{equation}
q_{\rm r}^{A}(y|u,s)
=
\sum_x q_{\rm r}(x|u,s)a(y|x,s),
\label{eq:effective-renderer}
\end{equation}
which has the same desired output distribution $q_0(\cdot|s)$.
\end{lemma}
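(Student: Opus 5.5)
The plan is to compute the post-attack conditional output law directly and reduce everything to a per-letter statement about $A_s$. Because the attack is memoryless and does not depend on $(M,K)$, the Markov chain $(M,K,U^n)-(S^n,X^n)-Y^n$ holds. Averaging over $x^n$ then gives, for every message $m$,
\begin{equation*}
p(y^n|s^n,m)=\sum_{x^n}p(x^n|s^n,m)\prod_{i=1}^n a(y_i|x_i,s_i).
\end{equation*}
Substituting the incoming exactness~\eqref{eq:exact-output}, the sum factorizes coordinatewise:
\begin{equation*}
p(y^n|s^n,m)=\prod_{i=1}^n \bigl[q_0(\cdot|s_i)A_{s_i}\bigr](y_i).
\end{equation*}
On continuous alphabets the same composition is read as a composition of kernels (Fubini for product measures).

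\emph{Sufficiency.} If~\eqref{eq:q0-stationary} holds for every relevant $s$, each factor equals $q_0(y_i|s_i)$, and~\eqref{eq:attack-exact-output} follows.

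\emph{Necessity.} Suppose~\eqref{eq:attack-exact-output} holds. Then for $q_0$-almost every $s^n$ the two product measures $\prod_i q_0(\cdot|s_i)A_{s_i}$ and $\prod_i q_0(\cdot|s_i)$ coincide. Taking the first-coordinate marginal gives $q_0(\cdot|s_1)A_{s_1}=q_0(\cdot|s_1)$. On finite alphabets this holds for every $s_1$ with $q_0(s_1)>0$, since any such $s_1$ extends to a positive-probability $s^n$. In the continuous case, the almost-every-$s^n$ statement projects to an almost-every-$s_1$ statement, by Fubini applied to the null set of bad $s^n$. This proves the equivalence.

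\emph{Effective renderer.} Composing the two memoryless kernels gives $q_{\rm r}^A(y^n|u^n,s^n)=\prod_i q_{\rm r}^A(y_i|u_i,s_i)$ with $q_{\rm r}^A$ as in~\eqref{eq:effective-renderer}; this uses the conditional independence of $Y^n$ from $U^n$ given $(X^n,S^n)$. For the desired output, take any latent law in the preimage set $\mathcal P_s(q_0;q_{\rm r})$, for instance $q_{\rm lat}(\cdot|s)$ via~\eqref{eq:q0}. Then
\begin{equation*}
\sum_u q_{\rm lat}(u|s)\,q_{\rm r}^A(\cdot|u,s)=q_0(\cdot|s)A_s=q_0(\cdot|s),
\end{equation*}
where the last equality is stationarity. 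So $q_{\rm r}^A$ has the same desired output distribution, and more generally $\mathcal P_s(q_0;q_{\rm r})\subseteq\mathcal P_s(q_0;q_{\rm r}^A)$.

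The only delicate step is the measure-theoretic necessity direction on continuous alphabets, which must pass from a block almost-sure equality to a single-letter almost-sure statement. I expect the marginalization argument above to settle it without further conditions.
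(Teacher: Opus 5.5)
Your proposal is correct and follows the paper's argument. The paper likewise composes the incoming exact product law with the memoryless, message- and key-independent attack, so that the post-attack law factorizes as $\prod_i [q_0(\cdot|s_i)A_{s_i}](y_i)$; the equivalence and the effective-renderer identity are then read off, and your coordinate-marginal necessity step and null-set projection for continuous alphabets simply fill in details the paper leaves implicit.
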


The proof follows by composing the incoming exact product law with the memoryless attack.  Importantly, the stationarity condition~\eqref{eq:q0-stationary} is not an additional modeling assumption: it is the single-letter consequence of requiring the regenerated block output to remain a valid ordinary generation as specified in Sec.~\ref{sec:model-attack}.

We can therefore define the exact-output-preserving renderer-realizable family
\begin{equation}
\Areg
=
\{A:\ A\text{ is renderer-realizable and }q_0A=q_0\},
\label{eq:Areg}
\end{equation}
and, for the one-letter distortion induced by the operational per-pass criterion,
$D(A)=\EE[d_{\rm a}(X,Y)],$
define
\begin{equation}
\Areg(D)
=
\{A\in\Areg:\ D(A)\le D\}.
\label{eq:AregD}
\end{equation}
The minimum distortion of any genuine output-preserving renderer-realizable regeneration is
\begin{equation}
D_{\min}
\triangleq
\inf_{A\in\Areg}D(A).
\label{eq:Dmin-general}
\end{equation}

For any fixed $A\in\Areg(D)$, Lem.~\ref{lem:attack-reduction} turns the attack problem into the base latent-space watermarking problem with detector observation $Y$ and renderer $q_{\rm r}^{A}$. Thus, the inner and outer bounds in Thm.~\ref{thm:finite} apply directly, and if $q_{\rm r}^{A}$ satisfies the UP condition relative to $q_0$, Thm.~\ref{thm:up} gives the fixed-attack capacity region. \draftEightBegin
The genuinely new issue is therefore uniformity: one code and one detector must work over an attack family.
\draftEightEnd

The resulting robustness problem is compound: the same watermark code and detector must work uniformly for every $A\in\Areg(D)$.  In general finite alphabets, such a capacity does not follow by simply inserting $\inf_A$ into the fixed-attack formulas.

A useful endpoint is independent resampling.  The attacker discards $X$, draws
$U_{\rm a}\sim q_{\rm lat}(\cdot|S),$
and rerenders, giving $[A_{{\rm res},s}]_{x,y}=q_0(y|s)$.  Define
\begin{equation}
D_{\rm res}
\triangleq
\EE[d_{\rm a}(X,\tilde X)],
\label{eq:Dres-general}
\end{equation}
where, conditioned on $S$, $X$ and $\tilde X$ are independent draws from $q_0(\cdot|S)$.  Then
\begin{equation}
D\ge D_{\rm res}
\quad\Longrightarrow\quad
C_{{\rm cmp},1}(D,R_k)=0.
\end{equation}
This is a guaranteed zero-capacity threshold, not necessarily the smallest distortion at which some admissible attack can kill the watermark.

\subsection{Scalar Gaussian Regeneration Geometry}

\draftEightBegin
We now specialize the regeneration model to the scalar Gaussian renderer~\eqref{eq:scalar-gaussian}, while retaining the original information pattern: the random context $S^n$ is known noncausally to the watermark encoder, the detector does not observe $S^n$, and the attacker may use $S$ as allowed in Sec.~\ref{sec:model-attack}.  Define the residual
\draftEightEnd
\begin{equation}
\bar X\triangleq X-aS=U+Z,
\label{eq:Xbar}
\end{equation}
where 
$
\sigma_{\bar X}^2
=
\sigma_U^2+\sigma_Z^2
$
\draftEightBegin
and $\sigma_S^2=\operatorname{Var}(S)$.  The residual is an analytical device; the detector continues to observe only the attacked public sample and the secret key.
\draftEightEnd
A same-generator rerender takes the form
$Y=aS+U_{\rm a}+Z_{\rm a},$
\draftEightBegin
with fresh renderer noise $Z_{\rm a}\sim\mathcal N(0,\sigma_Z^2)$.  Exact preservation of the conditional Gaussian output law requires, for every relevant $s$,
\begin{equation}
Y|S=s\sim\mathcal N(as,\sigma_{\bar X}^2).
\label{eq:gaussian-attack-exact}
\end{equation}
Equivalently, with $\bar Y\triangleq Y-aS$, we have $\bar Y|S=s\sim\mathcal N(0,\sigma_{\bar X}^2)$ for every $s$, and hence $\bar Y\perp S$.  Gaussian deconvolution further gives
$U_{\rm a}|S=s\sim\mathcal N(0,\sigma_U^2)$ for every $s$.

Under squared-error attack distortion, the context shift cancels and
$\EE[(X-Y)^2]=\EE[(\bar X-\bar Y)^2]$.  Therefore the minimum genuine rerendering distortion remains
\draftEightEnd
$
D_{\min}
=
2\sigma_{\bar X}^2
-2\sigma_{\bar X}\sigma_U.
$
A stochastic renderer necessarily injects fresh noise on every genuine rerender, so $D_{\min}>0$ whenever $\sigma_Z^2>0$; if $\sigma_Z^2=0$, then $D_{\min}=0$.

For the canonical Gaussian regeneration family, let
$U_{\rm a}=\beta\bar X+N_{\rm a},$
\draftEightBegin
where independent Gaussian $N_{\rm a}$ is chosen so that $U_{\rm a}$ has the required marginal.  The attacked residual satisfies
\draftEightEnd
\begin{equation}
\bar Y=\beta\bar X+W,
\label{eq:canonical-attack}
\end{equation}
where $W$ is independent Gaussian with variance $\sigma_{\bar X}^2(1-\beta^2)$.  The distortion is
$D(\beta)=2\sigma_{\bar X}^2(1-\beta),$
so
\begin{equation}
\beta(D)=1-\frac{D}{2\sigma_{\bar X}^2}.
\label{eq:betaD}
\end{equation}
For $D_{\min}\le D\le D_{\rm res}$, where $D_{\rm res}=2\sigma_{\bar X}^2$, the canonical remapping is renderer-realizable exactly when $\operatorname{Var}(N_{\rm a})=\sigma_U^2-\beta^2\sigma_{\bar X}^2\ge0$. In the range above, this is $0\le\beta\le\sigma_U/\sigma_{\bar X}$. Thus the lower endpoint is the least-distorting genuine rerender and the upper endpoint is independent resampling. Below $D_{\min}$, the genuine one-pass attack family is empty, rather than a family whose capacity is given by extrapolating the formula.

\subsection{One-Pass Gaussian Compound Capacity}

\draftEightBegin
Let $C_{{\rm cmp},1}^{\rm G}(D,R_k)$ be the largest watermark rate achieved by one code and one detector uniformly over the full one-pass scalar Gaussian family $\Areg(D)$ defined above.  The detector remains the base-model detector $g_n(Y^n,K)$ and therefore does not know $S^n$.
\draftEightEnd

\begin{theorem}[One-pass compound Gaussian capacity]
\label{thm:compound}
\draftEightBegin
For the scalar Gaussian model~\eqref{eq:scalar-gaussian} with squared-error attack distortion and the full renderer-realizable one-pass family $\Areg(D)$, for $D_{\min}\le D\le D_{\rm res}$,
\draftEightEnd
\begin{equation}
C_{{\rm cmp},1}^{\rm G}(D,R_k)
=
\frac12\log
\frac{\sigma_{\bar X}^2}
{\sigma_{\bar X}^2-
\beta(D)^2\sigma_U^2(1-2^{-2R_k})}.
\label{eq:compound-capacity}
\end{equation}
Moreover,
$
C_{{\rm cmp},1}^{\rm G}(D,R_k)=0
$
for  $D\ge D_{\rm res}$.
\draftEightBegin
The canonical Gaussian regeneration~\eqref{eq:canonical-attack} is least favorable over the full admissible one-pass family, and one common state-adapted Gaussian watermark construction achieves~\eqref{eq:compound-capacity} uniformly over that family.
\draftEightEnd
\end{theorem}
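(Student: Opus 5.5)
The proof has a converse half and an achievability half. For the converse, we fix the canonical attack $\bar Y=\beta(D)\bar X+W$ from~\eqref{eq:canonical-attack}. Because a compound code must in particular work against this single member of $\Areg(D)$, the compound capacity is at most the fixed-attack capacity for this attack. Lemma~\ref{lem:attack-reduction} turns that fixed-attack problem into the base problem with effective renderer
\[
Y=aS+\beta U+(\beta Z+W).
\]
This is again a jointly Gaussian renderer. Its renderer-visible innovation is $T=\beta U$, with residual variance $\sigma_{\bar X}^2-\beta^2\sigma_U^2$, so its effective SNR is $\gamma_A=\beta^2\sigma_U^2/(\sigma_{\bar X}^2-\beta^2\sigma_U^2)$. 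Applying Thm.~\ref{thm:gaussian} with $d=1$, and noting that it already covers a detector without the context, gives
\[
C\le\tfrac12\log\frac{1+\gamma_A}{1+\gamma_A2^{-2R_k}}.
\]
A short algebraic simplification turns this into~\eqref{eq:compound-capacity}. At $D\ge D_{\rm res}$ we have $\beta=0$, the bound is zero, and this agrees with the resampling argument.

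For achievability, I would use one watermark construction for the whole family. It is the scalar Costa-type auxiliary $V=U+\alpha aS+N_V$ with $I(V;U|S)=R_k$, combined with the multicoding and exactification scheme from the proof of Thm.~\ref{thm:finite}, adapted to the Gaussian case as in App.~\ref{app:gaussian}. Here $\alpha$ is the value that makes the scheme optimal against the canonical attack. The main obstacle is to show that this single code, with one detector, remains reliable for every $A\in\Areg(D)$ and not only for the canonical one.

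My plan for that step is the following. For any admissible attack, the attacked residual obeys $\bar Y\sim\mathcal N(0,\sigma_{\bar X}^2)$ and $\bar Y\perp S$, and its distortion satisfies $\EE[(\bar X-\bar Y)^2]\le D$. Together these force $\EE[\bar X\bar Y]\ge\beta(D)\sigma_{\bar X}^2$. So every admissible attack preserves the second moments and has at least the canonical correlation. I would let the detector use a Gaussian-metric (nearest-neighbour / linear-MMSE-scaled) decoder for $V^n$ from $Y^n$, designed for the canonical attack. By the standard mismatched/Gaussian-worst-noise argument (Lapidoth's nearest-neighbour decoding), the achievable rate of such a decoder depends on the attack only through these second moments. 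This needs care: the relevant correlations are $\EE[V\bar Y]$ and $\EE[S\bar Y]$. The condition $\bar Y\perp S$ kills the second, and I expect to reduce the first, via $V-(S,\bar X)-\bar Y$ and the Gaussian structure of $(V,S,\bar X)$, to $\EE[\bar X\bar Y]$ times a fixed coefficient. Higher correlation then only helps, so the rate is at least the canonical rate, uniformly over the family.

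Finally, the exact-output constraint before the attack is met by the base construction. Post-attack preservation holds automatically for every $A$ in the family by Lemma~\ref{lem:attack-reduction}. Together with the converse, this shows both that the canonical attack is least favorable and that the capacity formula~\eqref{eq:compound-capacity} holds.
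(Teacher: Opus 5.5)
Your converse is the paper's: restrict to the canonical attack, obtain the effective renderer $Y=aS+\beta(D)U+Z_{\rm eff}$, and apply Thm.~\ref{thm:gaussian} with $d=1$. Your achievability uses the paper's code (the Costa-type auxiliary with $\alpha=\alpha_D=\beta(D)\sigma_U^2/\sigma_{\bar X}^2$). It also uses the paper's covariance floor, derived the same way from $\EE[\bar X\bar Y]\ge\beta(D)\sigma_{\bar X}^2$, $\bar Y\perp S$, and $V-(S,\bar X)-\bar Y$. You differ in how that floor becomes a uniform rate. The paper argues that, for the fixed Gaussian marginals of $(V,Y)$, the Gaussian coupling minimizes $I(V;Y)$, so $I(V;Y)-I(V;S)\ge C_D(r)$ for every attack. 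It then needs a universal decoder. You instead fix the canonical-matched quadratic metric. Over a Gaussian codebook its GMI depends on the attack only through $\EE[VY]$ and is affine increasing in it. That is correct, and it has the merit of giving an explicit attack-independent detector.

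The step that does not follow as written is the passage from ``for every $A$ the achievable rate is at least the canonical rate'' to one code that works for the whole family. The GMI (or Lapidoth-type) bound is a random-coding statement for a single fixed channel. It gives $\EE_{\mathcal C}[P_e(\mathcal C,A)]\to0$, even uniformly in $A$. Compound reliability, however, needs one deterministic codebook, which means controlling $\EE_{\mathcal C}[\sup_A P_e(\mathcal C,A)]$. For a fixed codebook, the ideal error depends on the entire continuous kernel $p_A(y|v)$. At that level $\Areg(D)$ is uncountable and infinite-dimensional, so you cannot simply move the supremum outside the expectation.

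This is exactly the job of the paper's compactness-plus-quantization step. Weak compactness of $\mathcal J_{D,r}$ gives a single level $b_*$ with a uniform margin. The induced family $[V]_{b_*}\to[Y]_{b_*}$ is then a compact subset of a finite-dimensional simplex, where a polynomial-size net and a uniform error exponent yield one good codebook.

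Your metric argument can be finished the same way by using a quantized codebook and a quantized observation. The fixed quadratic metric's GMI still depends only on the fixed marginals and the cross-moment. The perturbation $|\EE[[V]_b[Y]_b]-\EE[VY]|$ is bounded by a quantity depending only on the fixed Gaussian marginals, so the margin survives uniformly. As written, though, the proposal does not supply this step. You should also state explicitly that the soft-covering transfer from the ideal law and the exactification mismatch each cost at most a total-variation term uniform over $A$. Both occur upstream of the renderer and the attack.
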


\draftEightBegin
The achievability combines the common-auxiliary compound Gel'fand--Pinsker covering/packing structure~\cite{PiantanidaShamai2010CompoundState} with the latent-synthesis and exactification requirements of the present problem.  A single attack-independent Costa-type auxiliary adapts to the random context while a universal compound decoder operates directly on the attacked sample $Y^n$, without observing $S^n$.  App.~\ref{app:compound} gives the details.  In this scalar model, the same capacity value also holds over the larger class of all memoryless state-dependent Gaussian-output-preserving attacks with distortion at most $D$, throughout $D_{\min}\le D\le D_{\rm res}$.
\draftEightEnd

\subsection{Repeated Regeneration and Watermark Survival}

\draftEightBegin
Prior work studies diffusive and regeneration attacks that degrade or remove image watermarks~\cite{An2024WAVES,Zhao2024Removable}. The result below is a distortion-parameterized capacity calculation for a specified canonical Gaussian channel sequence; it retains the random hidden-context information pattern of Thm.~\ref{thm:compound}.
\draftEightEnd

For the $\ell$-th regeneration round of the canonical Gaussian model, write
\begin{equation}
\bar X^{[\ell]}
=
\beta(D)\bar X^{[\ell-1]}+W_\ell.
\label{eq:repeated-recursion}
\end{equation}
\draftEightBegin
Every residual $\bar X^{[\ell]}$ is independent of $S$ and distributed as $\mathcal N(0,\sigma_{\bar X}^2)$, so each regenerated public sample preserves the desired conditional law exactly.  Iterating~\eqref{eq:repeated-recursion} and collecting the independent Gaussian noises gives
\begin{equation}
X^{[L]}
=
aS+\beta(D)^L U+N_L,
\label{eq:repeated-effective-renderer}
\end{equation}
where $N_L\perp(S,U)$ is Gaussian with variance $\sigma_{\bar X}^2-\beta(D)^{2L}\sigma_U^2$.
\draftEightEnd
After $L$ additional renderer uses, the original latent contribution is attenuated by $\beta(D)^L$.  The effective renderer-visible SNR is
\begin{equation}
\gamma_L(D)
=
\frac{\beta(D)^{2L}\sigma_U^2}
{\sigma_{\bar X}^2-\beta(D)^{2L}\sigma_U^2}.
\label{eq:gammaL}
\end{equation}
\draftEightBegin
The effective renderer~\eqref{eq:repeated-effective-renderer} is jointly Gaussian, so Thm.~\ref{thm:gaussian} applies directly and shows that hiding the context from the detector causes no capacity loss.  Hence the exact canonical repeated-regeneration curve is
\draftEightEnd
\begin{equation}
C_L^{\rm G}(D,R_k)
=
\frac12\log
\frac{\sigma_{\bar X}^2}
{\sigma_{\bar X}^2-
\beta(D)^{2L}\sigma_U^2(1-2^{-2R_k})}.
\label{eq:repeated-capacity}
\end{equation}
For this canonical family, $C_L^{\rm G}$ is nonincreasing in both $D$ and $L$, and if $|\beta(D)|<1$, then $C_L^{\rm G}(D,R_k)\to0$ as $L\to\infty$.

In particular, for a stochastic renderer $\sigma_Z^2>0$,
$\beta(D_{\min}) = \frac{\sigma_U}{\sigma_{\bar X}}<1,$
so even the least-distorting genuine canonical rerender eventually drives every positive target watermark rate below threshold.

The distortion $D$ is imposed \emph{per renderer reuse}.  Under the canonical family, the end-to-end distortion after $L$ passes is
$
D_{\rm end}^{(L)}
=
2\sigma_{\bar X}^2\bigl(1-\beta(D)^L\bigr),
$
which generally differs from $D$.  Thus $L$ regeneration rounds cannot be reinterpreted as one distortion-$D$ attack.

For a target rate $0<R\le C_0^{\rm G}(R_k)$, define the survival horizon, where $C_0^{\rm G}(R_k)$ denotes the unattacked scalar capacity and $L=0$ means no regeneration,
\begin{equation}
L_{\rm surv}^{\rm G}(R;D,R_k)
=
\max\{L\ge0:C_L^{\rm G}(D,R_k)\ge R\}.
\end{equation}
Let
\begin{equation}
\rho_R(R_k)
=
\frac{\sigma_{\bar X}^2}{\sigma_U^2}
\frac{1-2^{-2R}}
{1-2^{-2R_k}}.
\label{eq:rhoR}
\end{equation}
For $R_k>0$, $0<\beta(D)<1$, and $0<\rho_R(R_k)\le1$ (equivalently, the target is feasible without attack),
\begin{equation}
L_{\rm surv}^{\rm G}(R;D,R_k)
=
\left\lfloor
\frac{\ln\rho_R(R_k)}{2\ln\beta(D)}
\right\rfloor.
\label{eq:survival-horizon}
\end{equation}
A value of zero means that the target is achievable before attack but no positive number of passes preserves it. If $\rho_R(R_k)>1$, the target is already infeasible without regeneration and the defining set is empty. At $D=D_{\rm res}$, $\beta(D)=0$ and the survival horizon is zero for every positive feasible target. These conventions avoid extrapolating~\eqref{eq:survival-horizon} outside its stated domain.

\draftEightBegin
Thm.~\ref{thm:compound} concerns the full one-pass scalar Gaussian attack family with random context hidden from the detector. Equation~\eqref{eq:repeated-capacity} instead concerns repeated \emph{canonical Gaussian} regeneration under the same information pattern. It is not an unrestricted multiround minimax result: the canonical attack is not claimed to be least favorable over arbitrary $L>1$ attack sequences.
\draftEightEnd

\subsection{Numerical Illustration of Regeneration Survival}

Figure~\ref{fig:regeneration} plots~\eqref{eq:repeated-capacity} versus the number of regeneration attack rounds $L$ for fixed normalized per-pass distortions
\begin{equation}
\delta_D
\triangleq
\frac{D-D_{\min}}{D_{\rm res}-D_{\min}}
\in\{0,0.25,0.5,0.75,1\}.
\label{eq:deltaD}
\end{equation}
The parameters are $\sigma_Z^2=1$, $\sigma_U^2=4$, and $R_k=2$ bits/sample.  For any fixed attack severity, the plot directly shows the progressive loss of watermark rate across repeated same-generator regeneration.  The endpoints $\delta_D=0$ and $\delta_D=1$ correspond to the least-distorting genuine rerender and independent resampling, respectively.

\begin{figure}[t]
\centering
\includegraphics[width=0.58\linewidth]{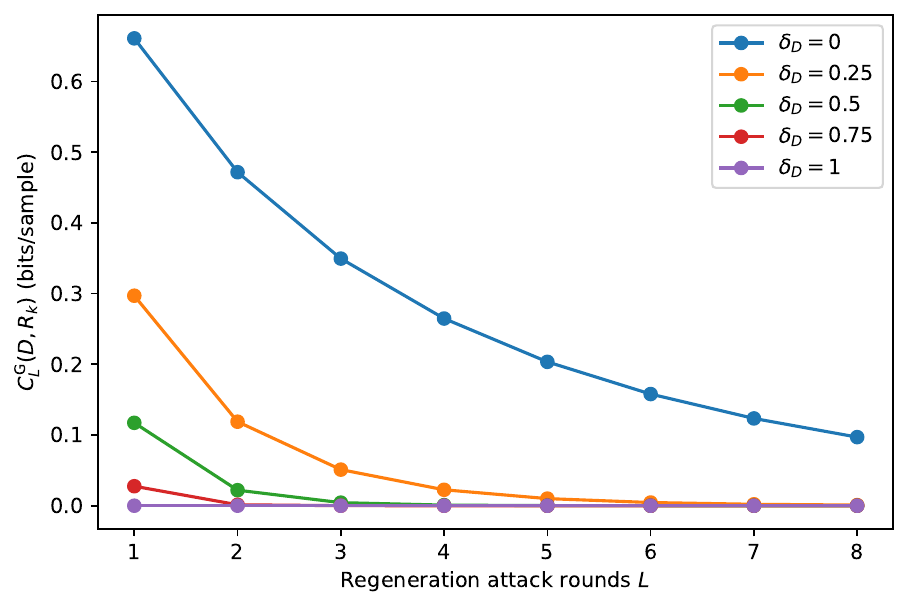}
\caption{Canonical Gaussian watermark capacity versus regeneration attack rounds $L$ for several normalized per-pass distortion levels $\delta_D$.  Parameters are $\sigma_Z^2=1$, $\sigma_U^2=4$, and $R_k=2$ bits/sample.}
\label{fig:regeneration}
\end{figure}
\FloatBarrier

\section{Discussion}
\label{sec:discussion}

We studied latent-space watermarking through a fixed pretrained generator. The encoder controls the latent input, but reliable recovery depends on how the generator transforms that input into the released sample. The Gaussian case gives a useful illustration of this distinction. Even when the latent prior is isotropic, the generator need not preserve information equally well in every direction. The optimal key allocation accounts for these differences and, when the key rate is small, concentrates the available key on the stronger modes. This suggests that watermark design should account for the generator's behavior, even when the generator itself is not modified.

\draftEightBegin
The same generator also provides a natural way to study regeneration attacks. In the scalar Gaussian model, a single state-adapted watermark code and context-blind detector can be made robust to the full admissible one-pass attack family, even though the attack itself may use the semantic context. Repeated canonical regeneration can nevertheless progressively reduce the recoverable watermark rate while preserving the desired conditional output distribution at every round. Thus, preserving the statistical properties of an ordinary generation does not by itself protect the watermark. Extending the compound minimax analysis to arbitrary multiround attack sequences, beyond the canonical Gaussian family considered here, remains an interesting direction for further work.
\draftEightEnd

\appendices

\section{Proof of Theorem~\ref{thm:finite}: Achievability}
\label{app:finite-ach}

In this appendix, we prove the achievability part of Thm.~\ref{thm:finite}.  For a fixed $p\in\mathscr P(q_0;q_{\rm r})$, we first establish the standard state-dependent synthesis and packing conditions in the native variables $(S,V,U,X)$.  We then address the step specific to the present problem: the preliminary synthesis construction only approximates the desired key-averaged latent distribution, whereas~\eqref{eq:exact-output} requires finite-block exact message-wise output preservation.  We therefore finish the proof by a conditional maximal-coupling exactification before the fixed renderer.

Fix $p\in\mathscr P(q_0;q_{\rm r})$. Its conditional latent marginal is
$
p(u|s)
\triangleq
\sum_v p(v,u|s).
$
Since $p\in\mathscr P(q_0;q_{\rm r})$,
\begin{equation}
\sum_u p(u|s)q_{\rm r}(x|u,s)
=
q_0(x|s).
\label{eq:app-induced-output}
\end{equation}
The exactification target is this selected $p(u|s)$, not necessarily $q_{\rm lat}(u|s)$; both produce the same desired output, but the general problem does not require them to agree.

Generate a random codebook whose codewords $V^n(m,\ell,k)$ are drawn independently according to $\prod_{i=1}^n p(v_i)$. Here $m$ and $k$ are the message and key indices, while $\ell\in[1:2^{n\hat R}]$ is an internal randomization index. Condition on a realization of this codebook. Given $(s^n,m,k)$, the likelihood encoder selects $\ell$ using the posterior
\begin{equation}
p(\ell|s^n,m,k)
=\frac{\prod_{i=1}^n p(s_i|v_i(m,\ell,k))}
{\sum_{\ell'}\prod_{i=1}^n p(s_i|v_i(m,\ell',k))}.
\label{eq:likelihood-encoder-posterior}
\end{equation}
If the denominator is zero, use any fixed distribution over the index set. Selecting $V^n=v^n(m,\ell,k)$ induces the likelihood-encoder distribution $p(v^n|s^n,m,k)$. The single-letter factors $p(s|v)$ are obtained from the chosen $p(s,v,u,x)$; the block likelihood encoder is not assumed to be the memoryless extension of $p(v|s)$~\cite{SongCuffPoor2016Likelihood}.

Given the selected auxiliary sequence $V^n$, generate a \emph{preliminary latent sequence} $\bar U^n$ through the memoryless extension of the latent-completion kernel:
\begin{equation}
p(\bar u^n|s^n,v^n)
=\prod_{i=1}^n p(\bar u_i|s_i,v_i).
\label{eq:memoryless-latent-completion}
\end{equation}
Each factor is the same single-letter kernel $p(u|s,v)$ evaluated at $(u,s,v)=(\bar u_i,s_i,v_i)$. We call~\eqref{eq:memoryless-latent-completion} the \emph{memoryless latent-completion kernel}. The bar distinguishes the latent produced by this preliminary construction from the final latent $U^n$: a later encoder-local correction will enforce exactness at the renderer input, before public generation. In particular,
\begin{equation}
p(\bar u^n|s^n,m,k)
=\sum_{v^n}p(v^n|s^n,m,k)p(\bar u^n|s^n,v^n).
\label{eq:preliminary-latent-encoder}
\end{equation}

For each fixed $(m,k)$, the likelihood encoder and high-probability soft covering for the channel $V\to S$ give simultaneous approximation of the source product distribution whenever
\begin{equation}
\hat R>I(V;S).
\label{eq:raw-cover-1}
\end{equation}
For each fixed message $m$, averaging over the aggregate $(k,\ell)$ subcodebook synthesizes the target pair $(S,U)$ whenever
\begin{equation}
R_k+\hat R>I(V;S,U).
\label{eq:raw-cover-2}
\end{equation}
The likelihood encoder uses the ideal posterior associated with the selected target distribution~\cite{SongCuffPoor2016Likelihood}.  Standard packing over the induced $V\to X$ channel gives reliable decoding, with $K$ available at the detector, if
\begin{equation}
R+\hat R<I(V;X).
\label{eq:raw-pack}
\end{equation}
Cuff's high-probability soft-covering theorem makes the two covering conditions hold simultaneously over the exponentially many relevant subcodebooks~\cite{Cuff2016HighProbability}, while standard random-coding packing makes the packing-good event hold with probability tending to one.  Hence their intersection is nonempty, and a deterministic codebook satisfying all three properties exists.  In particular, the preliminary latent sequence from~\eqref{eq:memoryless-latent-completion} satisfies
\begin{equation}
\max_m \delta_n(m)\to0,
\label{eq:delta-max}
\end{equation}
where
\begin{equation}
\delta_n(m)
\triangleq
\left\|
p(s^n,\bar u^n|m)
-
\prod_{i=1}^n q_0(s_i)p(\bar u_i|s_i)
\right\|_{\rm TV}.
\label{eq:delta-message}
\end{equation}

Eliminating $\hat R$ from~\eqref{eq:raw-cover-1}--\eqref{eq:raw-pack} yields
\begin{align}
R
&<I(V;X)-I(V;S),\label{eq:app-inner-rate-strict}\\
R_k
&>R+I(V;S,U|X),
\label{eq:app-inner-key-strict}
\end{align}
where the second identity uses the Markov relation $V-(S,U)-X$.

It remains to replace the preliminary approximation~\eqref{eq:delta-message} by the exact condition required by the watermarking problem.  The two joint distributions in~\eqref{eq:delta-message} have the same $S^n$ marginal.  For every $(s^n,m)$ define
\begin{equation}
\delta_n(s^n,m)
\triangleq
\left\|
p(\bar u^n|s^n,m)
-
\prod_{i=1}^np(\bar u_i|s_i)
\right\|_{\rm TV}.
\label{eq:delta-conditional}
\end{equation}
Then
\begin{equation}
\delta_n(m)
=
\sum_{s^n}
\left[\prod_{i=1}^n q_0(s_i)\right]
\delta_n(s^n,m).
\label{eq:delta-average}
\end{equation}
For each $(s^n,m)$, choose a maximal coupling between the two conditional distributions in~\eqref{eq:delta-conditional}~\cite{Thorisson1986Coupling}.  Equivalently, there exists a \emph{latent-exactification kernel} $p(u^n|s^n,m,\bar u^n)$ that uses only $(S^n,M,\bar U^n)$ and encoder-private randomness and whose output $U^n$ satisfies
\begin{equation}
p(u^n|s^n,m)
=
\prod_{i=1}^np(u_i|s_i)
\label{eq:exactified-latent}
\end{equation}
exactly, with $\PP\{U^n\ne\bar U^n|M=m\}=\delta_n(m)$. This block kernel is constructed from the key-averaged preliminary distribution and is not generally memoryless. It uses $(S^n,M,\bar U^n)$ and private randomness, but need not observe or modify $K$, so it consumes no additional secret-key rate. Exactification takes place at the latent input, leaving the physical renderer unchanged. The encoder knows the conditioning variables $(s^n,m)$ and can implement this correction locally. Wagner's near-perfect-to-perfect realism argument uses the same excess/deficit-mass correction at a reconstruction marginal~\cite[Theorem~1]{Wagner2022CommonRandomness}. In remote exact channel synthesis, an output-generating decoder generally does not know the source conditioning the target channel, so this local correction is not automatically available at no rate cost~\cite{YuTan2020Exact}.

The likelihood encoder, latent-completion kernel, and latent-exactification kernel are internal stages realizing one watermark latent encoder:
\begin{align}
&p(u^n|s^n,m,k)\nonumber\\
&\quad=\sum_{v^n,\bar u^n}p(v^n|s^n,m,k)
 p(\bar u^n|s^n,v^n)\nonumber\\
&\hspace{5em}\cdot p(u^n|s^n,m,\bar u^n).
\label{eq:final-latent-encoder-composition}
\end{align}
This expression records the side information retained at each stage; in particular, the exactification kernel still depends on the message and context. The construction does not impose this factorization on a generic code in Def.~\ref{def:watermark-code}.

Passing the exactified latent through the unchanged renderer gives, for every $(s^n,m)$,
\begin{align}
&p(x^n|s^n,m)\nonumber\\
&\quad=
\sum_{u^n}
\left[\prod_{i=1}^np(u_i|s_i)\right]
\left[\prod_{i=1}^n q_{\rm r}(x_i|u_i,s_i)\right]\nonumber\\
&\quad=
\prod_{i=1}^n q_0(x_i|s_i),
\label{eq:exactification-output}
\end{align}
where the last equality follows from~\eqref{eq:app-induced-output}.  Thus the finite-block exact output requirement~\eqref{eq:exact-output} holds message-wise.

Finally, couple the preliminary and exactified renderer outputs using the same downstream renderer randomness whenever $U^n=\bar U^n$.  If $\mathsf P_{e,n}^{\rm pre}$ and $\mathsf P_{e,n}^{\rm ex}$ denote the corresponding average decoding errors for the selected deterministic codebook, then
$\mathsf P_{e,n}^{\rm ex} \le \mathsf P_{e,n}^{\rm pre} + \max_m\delta_n(m).$
The two terms on the right tend to zero by packing and~\eqref{eq:delta-max}, respectively.  Thus the strict component inequalities in Thm.~\ref{thm:finite} are achievable. Standard coded time sharing yields the convex hull stated there. The closure operation belongs to the capacity-region definition in Sec.~\ref{sec:model}, not to the achievability statement itself.

\section{Proof of Theorem~\ref{thm:finite}: Converse}
\label{app:finite-conv}

In this appendix, we prove the converse part of Thm.~\ref{thm:finite}.  We first use the exact message-wise output constraint to obtain a product distribution and a key-accounting inequality.  We then apply the standard Csisz\'ar/Gel'fand--Pinsker single-letterization to the message and key bounds.

Exact conditional equality~\eqref{eq:exact-output} implies, for every message $m$,
\begin{equation}
p(s^n,x^n|m)
=
\prod_{i=1}^n q_0(s_i)q_0(x_i|s_i).
\label{eq:product-law-app}
\end{equation}
Hence $M\perp(S^n,X^n)$.  Let
$W\triangleq(M,K).$
By Fano's inequality,
$
H(M|X^n,K)\le n\epsilon_n,
\epsilon_n\to0.
$
To bound the key rate,
\begin{align}
nR_k
&=H(K)\nonumber\\
&\ge I(K;M,S^n|X^n)\nonumber\\
&=I(K;M|X^n)+I(K;S^n|M,X^n)\nonumber\\
&\stackrel{(a)}{\ge} nR+I(W;S^n|X^n)-n\epsilon_n,
\label{eq:key-account-app}
\end{align}
where (a) uses $I(K;M|X^n)=H(M)-H(M|X^n,K)\ge nR-n\epsilon_n$, since $M\perp X^n$, and $I(K;S^n|M,X^n)=I(W;S^n|X^n)$, since $M\perp(S^n,X^n)$ and $W=(M,K)$.

For the message rate,
\begin{align}
nR
&=H(M)\nonumber\\
&\le I(M;X^n|K)+n\epsilon_n\nonumber\\
&\le I(W;X^n)-I(W;S^n)+n\epsilon_n,
\label{eq:message-multiletter-app}
\end{align}
where $I(W;S^n)=0$ since $(M,K)$ is independent of the source sequence.  Define
\begin{equation}
V_i
\triangleq
(W,X^{i-1},S_{i+1}^n).
\label{eq:Vi-app}
\end{equation}
Using the Csisz\'ar sum identity in the standard Gel'fand--Pinsker single-letterization~\cite{ElGamalKim2011}, together with the product distribution~\eqref{eq:product-law-app},
\begin{equation}
I(W;X^n)-I(W;S^n)
=
\sum_{i=1}^n
\bigl[I(V_i;X_i)-I(V_i;S_i)\bigr].
\label{eq:csiszar-app}
\end{equation}
For the key term,
\begin{align}
I(W;S^n|X^n)
&=H(S^n|X^n)-H(S^n|W,X^n)\nonumber\\
&=\sum_{i=1}^n H(S_i|X_i)
-\sum_{i=1}^n H(S_i|W,X^n,S_{i+1}^n)\nonumber\\
&\stackrel{(a)}{\ge}
\sum_{i=1}^n
\Bigl[
H(S_i|X_i)
-H(S_i|W,X^i,S_{i+1}^n)
\Bigr]\nonumber\\
&=
\sum_{i=1}^n I(V_i;S_i|X_i).
\label{eq:key-singleletter-app}
\end{align}
where (a) follows because conditioning reduces entropy.

By the memoryless renderer law, $p(x_i|s_i,v_i,u_i)=q_{\rm r}(x_i|u_i,s_i)$. Let $Q\sim\mathrm{Unif}[1:n]$ be independent of all other variables and define $S=S_Q$, $U=U_Q$, $X=X_Q$, and $V=(V_Q,Q)$.
Then,
\begin{align}
\frac1n\sum_{i=1}^n
\bigl[I(V_i;X_i)-I(V_i;S_i)\bigr]
&= I(V;X)-I(V;S),\\
\frac1n\sum_{i=1}^n I(V_i;S_i|X_i)
&= I(V;S|X).
\end{align}
The time-shared distribution satisfies the renderer factorization by memorylessness and $p(s,x)=q_0(s,x)$ by~\eqref{eq:product-law-app}; hence it belongs to $\mathscr P(q_0;q_{\rm r})$.  Combining~\eqref{eq:message-multiletter-app}, \eqref{eq:csiszar-app}, \eqref{eq:key-account-app}, and~\eqref{eq:key-singleletter-app}, dividing by $n$, and letting $n\to\infty$ yields~\eqref{eq:outer-rate}--\eqref{eq:outer-key}.

Finally, standard support reduction preserves the required source/latent marginal together with the two information functionals, so the auxiliary alphabet may be restricted to
$|\mathcal V| \le |\mathcal S||\mathcal U|+1$
~\cite{ElGamalKim2011,CoverThomas2006}.

\subsection{A Strict-Looseness Example for the General Converse}

We close the proof of Thm.~\ref{thm:finite} by giving the counterexample referenced in Rem.~\ref{rem:finite-converse-tightness}. Let the semantic source be constant and the physical latent be $U=(U_1,U_2)\in\{0,1\}^2$, with $U_1$ and $U_2$ independent $\mathrm{Bern}(1/2)$ bits under ordinary generation.  The fixed renderer outputs $X=(F,X_{\rm b})$ according to
\begin{equation}
F\sim\mathrm{Unif}[1:2],
\quad
X_{\rm b}=U_F\oplus Z,
\quad
Z\sim\mathrm{Bern}(\delta),
\label{eq:strict-example-renderer}
\end{equation}
where $F$ and $Z$ are independent of the latent and $0<\delta<1/2$.  The prescribed desired output distribution is then $q_0(f,x_{\rm b})=\frac14$.
Let $h_2(\cdot)$ denote the binary entropy function, let $h_2^{-1}$ take values in $[0,1/2]$, let $a\star b=a(1-b)+(1-a)b$, and define
\begin{equation}
g_\delta(r)
\triangleq
1-h_2\!\left(\delta\star h_2^{-1}(1-\min\{r,1\})\right).
\label{eq:gdelta}
\end{equation}

\begin{proposition}[Strict looseness of the general converse]
\label{prop:strict-converse-example}
For the renderer~\eqref{eq:strict-example-renderer} and desired distribution $q_0(f,x_{\rm b})=\frac14$, the operational capacity and the outer bound in Thm.~\ref{thm:finite} are, respectively,
\begin{align}
C(R_k)&=g_\delta(R_k),\label{eq:strict-example-capacity}\\
\overline C(R_k)&=\min\{R_k,\,1-h_2(\delta)\}.\label{eq:strict-example-outer}
\end{align}
Consequently, for every $0<R_k<1$ and $0<\delta<1/2$,
\begin{equation}
g_\delta(R_k)
<
\min\{R_k,\,1-h_2(\delta)\}.
\end{equation}
Thus the inner construction is optimal and the gap in Thm.~\ref{thm:finite} is a converse deficiency. For this two-coordinate renderer, the diagonal and independent latent distributions give different preimages of the same $q_0$, so UP fails. Hence the example demonstrates inner-bound optimality outside the identifiable class.
\end{proposition}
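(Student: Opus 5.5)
The proposition has three parts: the outer-bound value \eqref{eq:strict-example-outer}, achievability of $g_\delta$ from the inner bound of Thm.~\ref{thm:finite}, and a matching operational converse. I would prove the converse directly at blocklength $n$ rather than through the single-letter outer bound, since that bound is the object being shown loose. Throughout, $S$ is constant, so $I(V;S)=0$ and $I(V;S|X)=0$.

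\emph{Outer bound.} With constant $S$, \eqref{eq:outer-rate}--\eqref{eq:outer-key} reduce to $R\le I(V;X)$ and $R_k\ge R$. Since $F$ is independent of $U$, we have $I(V;X)\le I(U;X)=\tfrac12\sum_{f}I(U_f;X_{\rm b}|F=f)$. Admissibility in $\mathscr P(q_0;q_{\rm r})$ forces each $U_f$ to be marginally $\mathrm{Bern}(1/2)$: the BSC$(\delta)$ is invertible on input distributions, so $X_{\rm b}$ uniform given $F=f$ forces $U_f$ uniform. Hence $I(V;X)\le 1-h_2(\delta)$. The diagonal choice $V=U_1=U_2$ attains this, and it is admissible. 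Mixing it with a constant $V$ (time sharing, using convexity of the outer region) gives every point with $R\le\min\{R_k,1-h_2(\delta)\}$, which proves \eqref{eq:strict-example-outer}.

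\emph{Achievability.} Take the diagonal latent $U_1=U_2=B\sim\mathrm{Bern}(1/2)$ and $V=B\oplus N$ with $N\sim\mathrm{Bern}(p)$, where $p=h_2^{-1}(1-\min\{R_k,1\})$. This distribution is admissible, $I(V;U)=1-h_2(p)$, and $V\to X_{\rm b}$ is a BSC$(p\star\delta)$ regardless of $F$, so $I(V;X)=g_\delta(R_k)$. Using $V-U-X$, we get $I(V;S,U|X)=I(V;U)-I(V;X)$. Condition \eqref{eq:inner-key} then becomes $R_k>R+\min\{R_k,1\}-g_\delta(R_k)$, so every $R<g_\delta(R_k)$ is achievable (strict inequalities, then closure).

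\emph{Converse (main step).} Exact preservation \eqref{eq:exact-output} gives $M\perp X^n$. Therefore
\[
nR-n\epsilon_n\le I(M;X^n|K)\le I(M,K;X^n)=I(K;X^n|M),
\]
and, since $F^n\perp(M,K)$, also $I(K;X^n|M)\le I(K;X^n|M,F^n)$.

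The key identification step fixes $m$ and a pattern $f^n$, and lets $Y_i=U_{f_i,i}$, so that $X_{{\rm b},i}=Y_i\oplus Z_i$. The key-averaged law of $X_{\rm b}^n$ given $(m,f^n)$ must be i.i.d.\ uniform. Because the $n$-fold BSC$(\delta)$ convolution is injective on distributions on $\{0,1\}^n$ (Fourier coefficients are scaled by $(1-2\delta)^{|S|}\neq0$), $Y^n$ given $(m,f^n)$ is itself i.i.d.\ uniform. Then $H(Y^n|K,m,f^n)\ge n-H(K)\ge n(1-R_k)$. Applying Mrs.\ Gerber's lemma to the memoryless BSC from $Y^n$ to $X_{\rm b}^n$ conditioned on $K$ gives
\[
H(X_{\rm b}^n|K,m,f^n)\ge n\,h_2\bigl(\delta\star h_2^{-1}(1-\min\{R_k,1\})\bigr).
\]
Hence $I(K;X^n|m,f^n)\le n\,g_\delta(R_k)$, and letting $n\to\infty$ yields $R\le g_\delta(R_k)$.

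The main obstacle is justifying that exactness pins down the law of the selected bits for every pattern $f^n$. This matters because the encoder does not know $F^n$, and Mrs.\ Gerber's lemma needs a uniform input with an entropy deficit controlled by the key.

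\emph{Strictness.} Set $\phi(r)=h_2(\delta\star h_2^{-1}(r))$, which is strictly convex in $r$ for $0<\delta<1/2$, with $\phi(1)=1$ and $\phi(0)=h_2(\delta)$. For $0<R_k<1$:
\begin{itemize}
\item Strict convexity places $\phi$ strictly below its chord, $\phi(1-R_k)<R_k\,h_2(\delta)+(1-R_k)$, so $g_\delta(R_k)=1-\phi(1-R_k)>R_k(1-h_2(\delta))$. This lower bound is not what we need.
\item Instead, use Mrs.\ Gerber's strict inequality $\phi(r)>r$ for $r\in(0,1)$. This gives $g_\delta(R_k)=1-\phi(1-R_k)<1-(1-R_k)=R_k$.
\item Since $\phi$ is strictly increasing and $1-R_k>0$, we get $\phi(1-R_k)>\phi(0)=h_2(\delta)$, i.e., $g_\delta(R_k)<1-h_2(\delta)$.
\end{itemize}
Together, $g_\delta(R_k)<\min\{R_k,1-h_2(\delta)\}$. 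Finally, the diagonal and independent latent laws both produce $q_0$, so UP fails, as claimed.
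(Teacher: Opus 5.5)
Your proposal is correct and follows essentially the same route as the paper. Achievability uses the diagonal latent $U=(A,A)$ with a BSC-degraded auxiliary. The converse uses product-BSC injectivity to force the selected bits $U_{f_i,i}$ to be uniform given $(m,f^n)$, then the key-entropy deficit and the conditional vector Mrs.\ Gerber lemma; your $I(K;X_{\rm b}^n|M,F^n)$ equals the paper's $I(W;X_{\rm b}^n|F^n)$ because $M\perp(F^n,X_{\rm b}^n)$. The outer bound follows from data processing, $I(V;X)\le I(U;X)=1-h_2(\delta)$. Two minor points: the single diagonal choice $V=A$ already attains $\min\{R_k,1-h_2(\delta)\}$ in the outer bound, so time sharing is unnecessary, and your first strictness bullet (the chord bound) is unused and can be dropped.
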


\begin{proof}
For a selector sequence $f^n\in[1:2]^n$, define the selected latent bits
$
B_{f^n}^n
\triangleq
(U_{1,f_1},\ldots,U_{n,f_n}),
$
where $U_i=(U_{i,1},U_{i,2})$ denotes the physical latent at time $i$.
The selector $F^n$ is fresh renderer randomness and is independent of the encoder variables.  By the exact output requirement, conditioned on $(M=m,F^n=f^n)$ the binary output $X_{\rm b}^n$ is uniform on $\{0,1\}^n$.  The product BSC$(\delta)$ is distributionally injective, so its input must also be uniform. Writing $b^n$ for a realization of $B_{f^n}^n$,
\begin{equation}
p(b^n|m,f^n)
=
2^{-n},\quad b^n\in\{0,1\}^n.
\label{eq:strict-example-projection}
\end{equation}
Hence $B_{f^n}^n\perp M$ and $H(B_{f^n}^n)=n$.  With $W=(M,K)$,
\begin{align}
I(W;B_{f^n}^n|F^n=f^n)
&=I(K;B_{f^n}^n|M,F^n=f^n)\\
&\le H(K)\le nR_k,\\
\frac1nH(B_{f^n}^n|W,F^n=f^n)
&\ge
(1-R_k)^+.
\label{eq:strict-example-keyentropy}
\end{align}
The conditional vector form of Mrs. Gerber's lemma~\cite{WynerZiv1973Entropy}, applied to the BSC$(\delta)$, therefore gives
\begin{equation}
\frac1nH(X_{\rm b}^n|W,F^n=f^n)
\ge
h_2\!\left(\delta\star h_2^{-1}((1-R_k)^+)\right).
\end{equation}
Since exact output preservation also gives $H(X_{\rm b}^n|F^n)=n$, Fano's inequality yields
\begin{align}
nR
&\le I(M;F^n,X_{\rm b}^n|K)+n\epsilon_n\\
&\le I(W;X_{\rm b}^n|F^n)+n\epsilon_n\\
&\le n g_\delta(R_k)+n\epsilon_n,
\end{align}
with $\epsilon_n\to0$.  This proves the converse in~\eqref{eq:strict-example-capacity}.

For achievability, restrict the physical latent to the diagonal form
$U=(A,A)$ with $A\sim\mathrm{Bern}(1/2)$,
and choose a binary auxiliary $V=A\oplus N_t$ with $N_t\sim\mathrm{Bern}(t)$ independent.  Then the desired output distribution is uniform on $\{1,2\}\times\{0,1\}$, and
$I(V;U)=1-h_2(t)$ and $I(V;X)=1-h_2(\delta\star t)$.
At the full rate $R=I(V;X)$ in closure, the inner key requirement in Thm.~\ref{thm:finite} reduces to $R_k\ge I(V;U)$.  Choosing $t$ so that $1-h_2(t)=\min\{R_k,1\}$ achieves $R=g_\delta(R_k)$, proving~\eqref{eq:strict-example-capacity}.

For the two-coordinate construction, the diagonal achieving distribution preserves the released-output law but not the pretrained product latent law. Thus the example also separates output preservation from the additional latent-preservation constraint in Cor.~\ref{cor:latent-preserving}.

Finally, for every admissible auxiliary distribution in the outer bound, data processing gives
$R\le I(V;X)\le I(U;X)=1-h_2(\delta),$
where the last equality follows from the uniform desired output distribution and the fact that every renderer row has conditional entropy $1+h_2(\delta)$.  The outer key condition gives $R\le R_k$.  Conversely, the same diagonal latent with $V=A$ attains the outer ceiling, yielding~\eqref{eq:strict-example-outer}.  For $0<R_k<1$, the shaping parameter satisfies $0<t<1/2$; strict data processing through the nontrivial BSC gives $g_\delta(R_k)<R_k$, while $\delta\star t>\delta$ gives $g_\delta(R_k)<1-h_2(\delta)$.  The strict inequality follows.
\end{proof}

For comparison, consider the one-coordinate specialization obtained by removing the selector $F$ and taking a scalar latent $U\in\{0,1\}$ with renderer $X=U\oplus Z$. This renderer is a distributionally injective BSC and satisfies the UP condition in Def.~\ref{def:up}. The same capacity and general-outer-bound formulas above yield the same strict gap for $0<R_k<1$. Thus preimage ambiguity is not the cause of the general outer-bound looseness.

\section{Proof of Lemma~\ref{lem:up-latent} and Theorem~\ref{thm:up}}
\label{app:up}

Fix a relevant context value $s$ and let
$\mathcal U_s \triangleq \suppset q_{\rm lat}(\cdot|s).$
Consider the renderer matrix $W_s$ whose row indexed by $u\in\mathcal U_s$ is $q_{\rm r}(\cdot|u,s)$.  We first show that these rows are linearly independent.  Otherwise there would exist a nonzero real vector $z$, supported on $\mathcal U_s$, such that
\begin{equation}
\sum_u z(u)q_{\rm r}(x|u,s)=0
\quad
\text{for every }x.
\label{eq:up-row-dependence}
\end{equation}
Since every renderer row sums to one, summing~\eqref{eq:up-row-dependence} over $x$ gives $\sum_u z(u)=0$.  Because $q_{\rm lat}(u|s)>0$ on $\mathcal U_s$, sufficiently small perturbations
$
q_{\rm lat}(\cdot|s)\pm\epsilon z
$
remain probability distributions and both render to $q_0(\cdot|s)$, contradicting Def.~\ref{def:up}.  Hence $W_s$ has full row rank on $\mathcal U_s$.

We now prove the block identification in Lem.~\ref{lem:up-latent}.  Fix a message $m$ and a relevant context sequence $s^n$.  For each coordinate $i$, define the key-averaged latent marginal
$\mu_i(u) \triangleq p(u_i|s^n,m).$
Marginalizing the exact output constraint~\eqref{eq:exact-output} to coordinate $i$ gives
$
\sum_u \mu_i(u)q_{\rm r}(x|u,s_i)
=
q_0(x|s_i)
$
for every $x$.
Thus $\mu_i\in\mathcal P_{s_i}(q_0;q_{\rm r})$. Recall that $\mathcal P_s(q_0;q_{\rm r})$ is the preimage set defined in~\eqref{eq:preimage-set}; by the UP condition in Def.~\ref{def:up}, this set contains only the pretrained latent law $q_{\rm lat}(\cdot|s)$.  Hence
$
\mu_i(u)
=
q_{\rm lat}(u|s_i).
$
Consequently, the full key-averaged conditional distribution $p(u^n|s^n,m)$ is supported on $\prod_{i=1}^n\mathcal U_{s_i}$.

On this product support, let $W_{s^n}$ denote the Kronecker product of the renderer matrices $W_{s_1},\ldots,W_{s_n}$.  Because each $W_{s_i}$ has full row rank, $W_{s^n}$ also has full row rank.  Hence the product renderer is injective on probability distributions supported on $\prod_i\mathcal U_{s_i}$.  The operational key-averaged latent distribution satisfies
\begin{align}
&\sum_{u^n}
p(u^n|s^n,m)
\prod_{i=1}^n q_{\rm r}(x_i|u_i,s_i)=
\prod_{i=1}^n q_0(x_i|s_i)
\label{eq:up-operational-preimage}
\end{align}
by exact output preservation.  The ordinary product latent distribution satisfies the same equation:
\begin{align}
\sum_{u^n}
\left[\prod_{i=1}^n q_{\rm lat}(u_i|s_i)\right]
\left[\prod_{i=1}^n q_{\rm r}(x_i|u_i,s_i)\right] =
\prod_{i=1}^n q_0(x_i|s_i).
\label{eq:up-ordinary-preimage}
\end{align}
Injectivity of $W_{s^n}$ therefore forces
\begin{equation}
p(u^n|s^n,m)
=
\prod_{i=1}^n q_{\rm lat}(u_i|s_i),
\end{equation}
which proves Lem.~\ref{lem:up-latent}.

We next prove the strengthened converse in Thm.~\ref{thm:up}, only highlighting the key differences from the converse of Thm.~\ref{thm:finite}.  To that end, Lem.~\ref{lem:up-latent} and the memoryless renderer give, for every message $m$,
\begin{equation}
p(s^n,u^n,x^n|m)
=
\prod_{i=1}^n
q_{\rm pre}(s_i,u_i,x_i),
\label{eq:up-full-product}
\end{equation}
which also implies $M\perp(S^n,U^n,X^n)$.  Repeat the key-accounting steps leading to~\eqref{eq:key-account-app}, with $(S^n,U^n)$ in place of $S^n$, which gives
\begin{equation}
H(K)
\ge
H(M)+I(W;S^n,U^n|X^n)-n\epsilon_n.
\label{eq:up-key-account}
\end{equation}
Using~\eqref{eq:up-full-product}, we have
\begin{align}
I(W;S^n,U^n|X^n)\geq\sum_{i=1}^n I(V_i;S_i,U_i|X_i),
\label{eq:up-key-singleletter}
\end{align}
where $V_i$ is defined in~\eqref{eq:Vi-app}.  By the same time-sharing steps as in App.~\ref{app:finite-conv},
$R_k \ge R+I(V;S,U|X),$
while the message-rate converse is unchanged.  This matches the closure of the achievable component of Thm.~\ref{thm:finite}, proving the capacity region in Thm.~\ref{thm:up}.

It remains to prove the equivalent capacity form~\eqref{eq:up-capacity}.  For any admissible conditional pmf $p(v,u|s)$ whose $U$-marginal is the pretrained latent law $q_{\rm lat}(u|s)$, define
$A \triangleq I(V;X)-I(V;S)$ and $J \triangleq I(V;U|S)$.
Since $V-(S,U)-X$,
\begin{equation}
J-A
=
I(V;S,U|X)
\ge0.
\label{eq:AJ-difference}
\end{equation}
For this fixed conditional pmf, Thm.~\ref{thm:up} permits
\begin{equation}
R
\le
\min\{A,\,R_k+A-J\}.
\label{eq:component-key-capacity}
\end{equation}
If $J\le R_k$, the key constraint is inactive and the minimum in~\eqref{eq:component-key-capacity} is $A$.  Suppose instead that $J>R_k$.  Let
$\theta \triangleq \frac{R_k}{J},$
and let $Q\sim\mathrm{Bern}(\theta)$ be independent of $(S,V,U,X)$.  Define a new auxiliary $V'$ by retaining $V$ when $Q=1$ and replacing it by a fixed erasure symbol when $Q=0$.  This independent erasure leaves the pretrained latent law $q_{\rm lat}(u|s)$ unchanged and gives
\begin{align}
I(V';U|S)
&=
\theta J
=
R_k,\label{eq:erased-key}\\
I(V';X)-I(V';S)
&=
\theta A.
\label{eq:erased-rate}
\end{align}
Moreover,
\begin{align}
\theta A-(R_k+A-J)
&=(1-\theta)(J-A) \ge0
\label{eq:erasure-dominates}
\end{align}
by~\eqref{eq:AJ-difference}.  Thus any nonnegative rate available from a conditional pmf with $J>R_k$ can be matched or exceeded by another admissible auxiliary satisfying $I(V';U|S)\le R_k$.  Therefore there is no loss in restricting the maximization to
$I(V;U|S)\le R_k,$
which proves~\eqref{eq:up-capacity}.

\section{Proof of Lemma~\ref{lem:effective-latent} and Theorem~\ref{thm:gaussian}}
\label{app:gaussian}

In this appendix, we prove the effective-latent reduction in Lem.~\ref{lem:effective-latent} and the vector-Gaussian capacity in Thm.~\ref{thm:gaussian}.  

\subsection{Effective-Latent Reduction and Exact Deconvolution}

Let
$\mu(S)\triangleq\EE[X|S],$
so that the Gaussian decomposition in~\eqref{eq:gaussian-decomposition} is $X=\mu(S)+T+Z$, where $T\perp S$ and $Z\perp(S,U)$, with $\Sigma_Z=\Sigma_{X|S,U}\succ0$.  We first prove Lem.~\ref{lem:effective-latent}.  Any physical-latent watermark encoder producing $U_i$ also produces
$T_i = \EE[X_i|S_i,U_i]-\EE[X_i|S_i]$
deterministically, where these conditional-mean functions are fixed by $q_{\rm pre}$, and the conditional renderer distribution depends on $(S_i,U_i)$ only through $(S_i,T_i)$. That is, conditioned on $S_i$, any change in $U_i$ that does not change $T_i$ does not affect the conditional renderer distribution.

Conversely, suppose an effective-latent watermark construction specifies $T^n(S^n,M,K)$ with each $(S_i,T_i)$ lying almost surely in the support of the ordinary $(S,T)$ distribution.  Under $q_{\rm pre}$, fix a regular conditional distribution $q_{\rm pre}(u|s,t)$.  After producing $T^n$, the physical encoder generates the latent coordinates using private randomness according to
\begin{equation}
p(u^n|s^n,t^n)
=
\prod_{i=1}^n q_{\rm pre}(u_i|s_i,t_i).
\label{eq:gaussian-physical-implementation}
\end{equation}
For every $(s_i,t_i)$ in the support of the ordinary $(S,T)$ distribution, this conditional distribution satisfies the defining relation between $U_i$ and $T_i$ almost surely.  Hence applying the unchanged renderer after~\eqref{eq:gaussian-physical-implementation} produces the same Gaussian conditional distribution, i.e., $X_i$ conditioned on $(S_i=s_i,T_i=t_i)$ follows
$\mathcal N\bigl(\mu(s_i)+t_i,\Sigma_Z\bigr).$

Thus the composition of the lift in~\eqref{eq:gaussian-physical-implementation} with the unchanged physical renderer is exactly the effective renderer on every support-compatible $(s^n,t^n)$.  Consequently, the lifted physical-latent code induces the same output distribution and the same decoding performance as the effective-latent code, proving Lem.~\ref{lem:effective-latent}.

We next establish the separate identification consequence needed for Thm.~\ref{thm:gaussian}: exact output preservation fixes the entire key-averaged block distribution of the effective latent $T^n$.  Fix a message $m$ and a context sequence $s^n$.  Let $\tilde T^n$ denote the key-averaged renderer-visible innovation induced by an arbitrary exact code.  That is, for each $i$,
$X_i=\mu(s_i)+\tilde T_i+Z_i,$
where the $Z_i$ are independent $\mathcal N(0,\Sigma_Z)$ and independent of all upstream variables.  For $t^n=(t_1,\ldots,t_n)$, the noise characteristic function is
\begin{equation}
\phi_{Z^n}(t^n)
=
\prod_{i=1}^n
\exp\!\left(-\frac12 t_i^{\mathsf T}\Sigma_Z t_i\right),
\end{equation}
which is nowhere zero.  Exact output preservation gives the desired conditional product distribution $\prod_{i=1}^n q_0(x_i|s_i)$. Hence
\begin{align}
\phi_{X^n-\mu(s^n)|S^n=s^n,M=m}(t^n)
&=
\phi_{\tilde T^n|S^n=s^n,M=m}(t^n)\phi_{Z^n}(t^n),
\label{eq:gaussian-char-factor}\\
\phi_{\tilde T^n|S^n=s^n,M=m}(t^n)
&=
\prod_{i=1}^n
\exp\!\left(-\frac12 t_i^{\mathsf T}\Sigma_T t_i\right),
\nonumber
\end{align}
where the second line follows by division by the nonzero noise factor.
Therefore
\begin{equation}
\tilde T^n\mid(S^n=s^n,M=m)\sim\prod_{i=1}^n\mathcal N(0,\Sigma_T)
\label{eq:gaussian-exact-T}
\end{equation}
for every $(s^n,m)$.  In particular, exact output preservation fixes the entire key-averaged block distribution of the effective latent $\tilde T^n$ in~\eqref{eq:gaussian-exact-T}, including independence across coordinates.  This identification property is used in the finite-key converse below.

\subsection{Key-Conditioned Domination}

Before applying conditional differential-entropy inequalities, we show a regularity consequence of the finite key.  Fix $(s^n,m)$ and a key value $k$ with probability $p(k)>0$.  Let
$\mu_k$ denote the conditional probability measure of $\tilde T^n$ given $(s^n,m,k)$
and let
$G$ denote its key-averaged conditional probability measure given $(s^n,m)$,
which is the Gaussian measure in~\eqref{eq:gaussian-exact-T}.  Since
$G = \sum_k p(k) \mu_k,$
for every measurable set $\mathcal B$,
$p(k)\mu_k(\mathcal B)\le G(\mathcal B).$
Thus $\mu_k\ll G$ and
\begin{align}
\frac{d\mu_k}{dG}
&\le \frac1{p(k)}
\quad G\text{-a.e.},\nonumber\\
D(\mu_k\|G)
&\le \log\frac1{p(k)}.
\label{eq:gaussian-domination}
\end{align}
In particular, each key-conditioned component inherits finite second moments and finite relative entropy with respect to $G$. Differential entropies below are taken on the active Gaussian support, after zero-variance modes are removed. The Gaussian entropy identity and~\eqref{eq:gaussian-domination} then justify the conditional entropy quantities in the EPI argument.

\subsection{Vector Converse}

We next prove the capacity upper bound in Thm.~\ref{thm:gaussian}.  Whiten the renderer noise and rotate into an eigenbasis of $\Gamma$.  On each active mode, write the transformed renderer output as
\begin{equation}
W_j=E_j+N_j,
\quad
E_j\sim\mathcal N(0,\gamma_j),
\quad
N_j\sim\mathcal N(0,1),
\label{eq:gaussian-parallel-modes}
\end{equation}
with independent modes.  Modes with $\gamma_j=0$ contribute zero rate and may be omitted from the differential-entropy calculation.

For the $n$th code, let $R_n$ and $R_{k,n}$ denote its message and key rates. Let $E^n=(E_1^n,\ldots,E_d^n)$ and $W^n=(W_1^n,\ldots,W_d^n)$ collect the length-$n$ sequences across the renderer-visible modes.  By~\eqref{eq:gaussian-exact-T}, the distribution of $E^n$ after averaging over the key is the ordinary product Gaussian distribution and is independent of $(S^n,M)$.  Therefore
\begin{equation}
I(S^n,M,K;E^n)
=
I(K;E^n|S^n,M)
\le
H(K)
\le
nR_{k,n}.
\label{eq:gaussian-key-budget-block}
\end{equation}
Define the modewise key-information quantities
\begin{equation}
r_{j,n}
\triangleq
\frac1n I(S^n,M,K;E_j^n|E_1^n,\ldots,E_{j-1}^n).
\label{eq:rjn}
\end{equation}
The chain rule and~\eqref{eq:gaussian-key-budget-block} give
\begin{equation}
\sum_j r_{j,n}\le R_{k,n}.
\label{eq:rjn-budget}
\end{equation}
Since the unconditional $E$ modes are independent Gaussian,
\begin{equation}
h(E_j^n|S^n,M,K,E_1^n,\ldots,E_{j-1}^n)
=
\frac n2\log(2\pi e\gamma_j)-nr_{j,n}.
\label{eq:gaussian-E-entropy}
\end{equation}
Given $(S^n,M,K,E_1^n,\ldots,E_{j-1}^n)$, the noises in the previous output modes are independent of $W_j^n$, so conditional data processing gives
\begin{equation}
h(W_j^n|S^n,M,K,W_1^n,\ldots,W_{j-1}^n)
\ge
h(W_j^n|S^n,M,K,E_1^n,\ldots,E_{j-1}^n).
\label{eq:gaussian-conditioning-step}
\end{equation}
Since $N_j^n$ is independent of $(E_j^n,S^n,M,K,E_1^n,\ldots,E_{j-1}^n)$, the conditional entropy-power inequality~\cite{ElGamalKim2011}, together with~\eqref{eq:gaussian-E-entropy}, gives
\begin{equation}
h(W_j^n|S^n,M,K,E_1^n,\ldots,E_{j-1}^n)
\ge
\frac n2
\log\!\left(
2\pi e[1+\gamma_j2^{-2r_{j,n}}]
\right).
\label{eq:gaussian-conditional-epi}
\end{equation}
The domination bound~\eqref{eq:gaussian-domination} ensures that the conditional entropy quantities used here are well-defined. Since exact output preservation fixes $h(W_j^n)=\frac n2\log(2\pi e(1+\gamma_j))$, we obtain
\begin{align}
I(S^n,M,K;W^n)
&=\sum_j
\bigl[h(W_j^n)-h(W_j^n|S^n,M,K,W_1^n,\ldots,W_{j-1}^n)\bigr]\nonumber\\
&\le
n\sum_j
\frac12\log
\frac{1+\gamma_j}
{1+\gamma_j2^{-2r_{j,n}}}.
\label{eq:gaussian-converse-sum}
\end{align}

For the converse, we additionally reveal $S^n$ to the detector.  Fano's inequality and the invertible whitening/rotation of the centered output give
\begin{equation}
nR_n
\le
I(M;X^n|S^n,K)+n\epsilon_n
\le
I(S^n,M,K;W^n)+n\epsilon_n,
\quad
\epsilon_n\to0.
\end{equation}
The allocation vectors $(r_{1,n},\ldots,r_{d,n})$ lie in a compact simplex by~\eqref{eq:rjn-budget}; passing to a convergent subsequence and using continuity yields a limiting allocation $r_j\ge0$ with $\sum_jr_j\le R_k$.  Combining with~\eqref{eq:gaussian-converse-sum} gives the right side of~\eqref{eq:gaussian-capacity}.  Since revealing the context $S^n$ can only help the detector, this upper bound applies to both $C_{\ctx}(R_k)$ and $C(R_k)$.  Because the objective is nondecreasing in every $r_j$, any unused key budget may be assigned without decreasing the value, so the maximization may be written with $\sum_jr_j=R_k$ as in Thm.~\ref{thm:gaussian}.

\subsection{Achievability and Key Allocation}

We finally show that the base-model detector attains the converse.  If $\gamma_j=0$ for every $j$, the capacity is zero and the claim is immediate.  Otherwise, in the whitened and rotated coordinates, write
$\bar X = \bar\mu(S)+\bar T+N,$
where the active coordinates satisfy $\bar T_j\sim\mathcal N(0,\gamma_j)$ and $N_j\sim\mathcal N(0,1)$ independently, with both independent of $S$.  This is a vector Gaussian dirty-paper channel with state $\bar\mu(S)$, input $\bar T$, and unit noise covariance.  Let
\begin{equation}
D_\alpha
\triangleq
\operatorname{diag}(\alpha_j),
\qquad
\alpha_j
=
\frac{\gamma_j}{1+\gamma_j},
\end{equation}
and define the standard vector-DPC auxiliary
$
V_0
=
\bar T+D_\alpha\bar\mu(S).
$
The vector Gaussian dirty-paper result gives the Costa cancellation for the state $\bar\mu(S)$~\cite{Costa1983DirtyPaper,YuCioffi2004SumCapacity}.  Since the transformed channel depends on $S$ only through $\bar\mu(S)$, this gives
$
I(V_0;S|\bar X)=0.
$

We impose the finite-key budget by independently degrading this vector-DPC auxiliary.  Choose a nonnegative allocation $(r_1,\ldots,r_d)$ that attains the maximization in~\eqref{eq:gaussian-capacity}; all key can be assigned among modes with $\gamma_j>0$, so $\sum_jr_j=R_k$.  For an active mode with $r_j>0$, let $\nu_j\triangleq\gamma_j/(2^{2r_j}-1)$ and choose independent $N_{V,j}\sim\mathcal N(0,\nu_j)$.  Define
\begin{equation}
V_j
=
V_{0,j}+N_{V,j}
=
\bar T_j
+\alpha_j\bar\mu_j(S)
+N_{V,j}.
\label{eq:gaussian-Costa-V}
\end{equation}
Modes with $r_j=0$ or $\gamma_j=0$ are omitted from the coding auxiliary and generated privately according to their ordinary Gaussian distribution.

Conditioned on $S$, the deterministic shifts disappear, and~\eqref{eq:gaussian-Costa-V} gives
$
I(V;T|S)
=
\sum_j r_j
=
R_k.
$
Because $V$ is obtained from $V_0$ through independent Gaussian degradation, the Costa conditional independence is preserved.  Since the whitening/rotation is invertible, equivalently
$
I(V;S|X)=0.
$
Consequently,
\begin{align}
I(V;X)-I(V;S)
&=I(V;X|S)\nonumber\\
&=\sum_j
\frac12\log
\frac{1+\gamma_j}
{1+\gamma_j2^{-2r_j}}.
\label{eq:gaussian-ach-rate}
\end{align}

It remains to meet the finite-block exactness criterion. The exactification step follows the same conditional maximal-coupling argument as App.~\ref{app:finite-ach}, now applied to the effective latent $T^n$, so we only summarize the Gaussian adaptation.  For this memoryless nonsingular Gaussian law, the general-source/channel soft-covering theorem applies: the relevant information density has finite moments, so the i.i.d. law of large numbers gives the usual mutual-information threshold, and the likelihood-encoder formulation has no finite-alphabet restriction~\cite{Cuff2013Synthesis,SongCuffPoor2016Likelihood}. Apply the likelihood-encoding/soft-covering construction to the effective latent $T$ with the auxiliary in~\eqref{eq:gaussian-Costa-V}.  For message $m$, let $\delta_n(m)$ denote the preliminary synthesis error in total variation, and define the message-average error
$
\bar\delta_n
\triangleq
\frac{1}{|\mathcal M_n|}\sum_{m\in\mathcal M_n}\delta_n(m).
$
The random-coding analysis gives an ensemble for which the expected $\bar\delta_n$ and the expected average decoding error both tend to zero.  Hence a deterministic codebook can be selected for which both quantities vanish.  Retain only messages satisfying
$\delta_n(m)\le\sqrt{\bar\delta_n}.$
By Markov's inequality, the removed fraction is at most $\sqrt{\bar\delta_n}\to0$, so the asymptotic message rate is unchanged and the retained messages satisfy $\max_m\delta_n(m)\to0$.  Since the retained fraction tends to one, their average decoding error also remains vanishing.  Apply the conditional maximal-coupling correction from App.~\ref{app:finite-ach}, now with $T^n$ as the exactified latent.  The correction forces the product Gaussian distribution~\eqref{eq:gaussian-exact-T} exactly for every retained message, and its mismatch probability changes the average decoding error by at most $\max_m\delta_n(m)\to0$.  Finally, generate any inactive modes privately and implement the resulting $T^n$ construction through the physical latent by~\eqref{eq:gaussian-physical-implementation}.  Hence the base-model detector achieves~\eqref{eq:gaussian-ach-rate}, matching the converse.

For completeness, optimizing the allocation in~\eqref{eq:gaussian-capacity} is a standard KKT calculation for a separable concave program~\cite{BoydVandenberghe2004}.  Differentiating the Lagrangian over the active modes gives
$2^{2r_j} = \frac{\gamma_j}{\lambda}$
for a common threshold $\lambda$, which is equivalent to~\eqref{eq:key-allocation}.

\section{Proof of Theorem~\ref{thm:compound}}
\label{app:compound}

\draftEightBegin
In this appendix, we prove the one-pass compound Gaussian capacity in Thm.~\ref{thm:compound} under the random hidden-context information pattern of the base model. For $D\ge D_{\rm res}$, independent resampling is admissible and gives zero capacity, so it remains to consider $D_{\min}\le D<D_{\rm res}$.
\draftEightEnd

\draftEightBegin
Assume $R_k>0$; the case $R_k=0$ follows by continuity and gives zero rate.  Fix a target rate strictly below the right side of~\eqref{eq:compound-capacity}.  By continuity, choose $r$ with $0<r<R_k$
such that the target rate is strictly below
\begin{equation}
C_D(r)
\triangleq
\frac12\log
\frac{\sigma_{\bar X}^2}
{\sigma_{\bar X}^2-
\beta(D)^2\sigma_U^2(1-2^{-2r})}.
\label{eq:CD}
\end{equation}
Let $N_V$ be independent Gaussian noise with variance
$\sigma_U^2/(2^{2r}-1)$ and define the attack-independent Costa-type auxiliary
\begin{equation}
V
=
U+\alpha_D aS+N_V,
\qquad
\alpha_D
\triangleq
\beta(D)\frac{\sigma_U^2}{\sigma_{\bar X}^2}.
\label{eq:compound-Costa-V}
\end{equation}
\draftEightEnd
Then \draftEightBegin
$I(V;U|S)=r
$ and $
\operatorname{Var}(V|S)
\draftEightEnd
=
\draftEightBegin
\frac{\sigma_U^2}{1-2^{-2r}}.
$
\draftEightBegin
The coefficient $\alpha_D$ depends only on the prescribed attack budget, not on the realized attack.

For any admissible attack, exact output preservation~\eqref{eq:gaussian-attack-exact} gives
$\bar Y\triangleq Y-aS\sim\mathcal N(0,\sigma_{\bar X}^2)$ and $\bar Y\perp S$.  Since the context shift cancels in squared error, the distortion constraint gives
\draftEightEnd
\begin{equation}
\Cov(\bar X,\bar Y)
\ge
\sigma_{\bar X}^2-\frac D2
=
\draftEightBegin
\beta(D)\sigma_{\bar X}^2.
\draftEightEnd
\label{eq:compound-cov-XY}
\end{equation}
\draftEightBegin
The attack obeys the conditional Markov relation
$U-(\bar X,S)-\bar Y$, while the original jointly Gaussian pair satisfies
$\EE[U|\bar X,S]
=
\frac{\sigma_U^2}{\sigma_{\bar X}^2}\bar X.
$
\draftEightBegin
Therefore~\eqref{eq:compound-cov-XY} implies
\draftEightEnd
$
\Cov(U,\bar Y)
\ge
\beta(D)\sigma_U^2.
$
\draftEightBegin
Consequently, using $\bar Y\perp S$,
\begin{align}
\operatorname{Var}(V)
&=
\frac{\sigma_U^2}{1-2^{-2r}}
+\alpha_D^2a^2\sigma_S^2,
\label{eq:compound-Vvar}\\
\operatorname{Var}(Y)
&=
\sigma_{\bar X}^2+a^2\sigma_S^2,
\label{eq:compound-Yvar}\\
\Cov(V,Y)
&\ge
\beta(D)\sigma_U^2+\alpha_Da^2\sigma_S^2
=
\alpha_D\bigl(\sigma_{\bar X}^2+a^2\sigma_S^2\bigr).
\draftEightEnd
\label{eq:compound-cov-VY}
\draftEightBegin
\end{align}
Thus both marginals of $(V,Y)$ are fixed Gaussian distributions independent of the attack, while their joint distribution need not be Gaussian.  For fixed Gaussian marginals and fixed covariance, the jointly Gaussian coupling maximizes joint entropy and therefore minimizes mutual information.  Since the covariance floor in~\eqref{eq:compound-cov-VY} is nonnegative over the present distortion range, every admissible attack satisfies
\begin{align}
I(V;Y)-I(V;S)
&\ge
\draftEightEnd
\frac12\log
\frac{\sigma_{\bar X}^2}
{\sigma_{\bar X}^2-
\draftEightBegin
\beta(D)^2\sigma_U^2(1-2^{-2r})}
\nonumber\\
&=C_D(r).
\label{eq:compound-uniform-GP}
\end{align}
The cancellation of $a^2\sigma_S^2$ in~\eqref{eq:compound-uniform-GP} is the Gaussian dirty-paper cancellation adapted to the least favorable covariance permitted by the attack budget.
\draftEightEnd

\draftEightBegin
We next turn the pointwise bound~\eqref{eq:compound-uniform-GP} into one detector that works over the continuum family.  Let $p(v)$ and $p(y)$ denote the fixed Gaussian marginal densities in~\eqref{eq:compound-Vvar}--\eqref{eq:compound-Yvar}, and define the enlarged coupling family
\draftEightEnd
\begin{equation}
\draftEightBegin
\mathcal J_{D,r}
\draftEightEnd
\triangleq
\left\{
\Pi:
\Pi_V(dv)=p(v)\,dv,
\draftEightBegin
\ \Pi_Y(dy)=p(y)\,dy,
\ \int vy\,d\Pi
\ge
\alpha_D(\sigma_{\bar X}^2+a^2\sigma_S^2)
\draftEightEnd
\right\}.
\draftEightBegin
\label{eq:JDr}
\draftEightEnd
\end{equation}
\draftEightBegin
Every physical renderer-realizable attack induces a coupling in $\mathcal J_{D,r}$.  The set of couplings of the two fixed Gaussian marginals is tight and weakly closed, hence weakly compact~\cite{Billingsley1999}.  The covariance half-space is also weakly closed because $VY$ is uniformly integrable over the fixed-marginal coupling family.  Indeed,
\draftEightEnd
\begin{align}
&\sup_{\Pi}
\EE_{\Pi}\!\left[
\draftEightBegin
|VY|\mathbf 1\{|VY|>\tau\}
\draftEightEnd
\right]\nonumber\\
&\quad\le
\draftEightBegin
\sqrt{\operatorname{Var}(Y)}
\draftEightEnd
\left(
\EE[V^2\mathbf 1\{|V|>\sqrt\tau\}]
\right)^{1/2}\nonumber\\
\draftEightBegin
&\qquad+
\draftEightEnd
\sqrt{\operatorname{Var}(V)}
\left(
\draftEightBegin
\EE[Y^2\mathbf 1\{|Y|>\sqrt\tau\}]
\draftEightEnd
\right)^{1/2}
\to0.
\label{eq:compound-uniform-integrability}
\end{align}
\draftEightBegin
Thus $\mathcal J_{D,r}$ is compact.
\draftEightEnd

\draftEightBegin
Let $[x]_b$ denote a nested finite interval quantization of a real variable $x$, with partitions refining to the Borel sigma-field and boundaries of zero probability under the fixed Gaussian marginals. For $\Pi\in\mathcal J_{D,r}$, define $I_b(\Pi)\triangleq I_{\Pi}([V]_b;[Y]_b)$. The same refining quantizer sequence can be chosen independently of the coupling, and
$I_b(\Pi)\uparrow I_{\Pi}(V;Y)$
for every $\Pi\in\mathcal J_{D,r}$~\cite[Lemma~5.5.5]{Gray2011Entropy}.  For each fixed $b$, $I_b(\Pi)$ is continuous in $\Pi$ under weak convergence.  Since the target rate is strictly below $C_D(r)$, choose $\epsilon>0$ such that
\begin{equation}
R+I(V;S)+3\epsilon<C_D(r)+I(V;S)
\le
\inf_{\Pi\in\mathcal J_{D,r}}I_{\Pi}(V;Y).
\label{eq:compound-strict-margin}
\end{equation}
The increasing open sets
\draftEightEnd
\begin{equation}
\mathcal O_b
\triangleq
\draftEightBegin
\left\{
\Pi\in\mathcal J_{D,r}:
I_b(\Pi)>R+I(V;S)+2\epsilon
\right\}
\draftEightEnd
\end{equation}
\draftEightBegin
cover the compact set $\mathcal J_{D,r}$. Hence there exists one finite $b_*$ such that
\draftEightEnd
\begin{equation}
\draftEightBegin
\inf_{\Pi\in\mathcal J_{D,r}}
I_{\Pi}\bigl([V]_{b_*};[Y]_{b_*}\bigr)
\draftEightEnd
>
\draftEightBegin
R+I(V;S)+2\epsilon.
\draftEightEnd
\label{eq:uniform-quantized-margin}
\end{equation}
\draftEightBegin
This finite detector interface is attack independent and uses the actual attacked observation $Y$, not the unobservable residual $Y-aS$.
\draftEightEnd

\draftEightBegin
Because $[V]_{b_*}$ is a function of $V$, data processing gives
\begin{align}
I([V]_{b_*};S)
&\le I(V;S),\label{eq:quantized-state-budget}\\
I([V]_{b_*};U|S)
&\le I(V;U|S)=r<R_k.
\label{eq:quantized-key-budget}
\end{align}
Choose $0<\eta<\epsilon$ and set the total multicoding rate
$
\hat R=I([V]_{b_*};S)+\eta.
$
Then the finite detector interface and the key margin imply the three raw coding inequalities
\begin{align}
\hat R&>I([V]_{b_*};S),\label{eq:compound-cover-state}\\
R+\hat R&<
\inf_{\Pi\in\mathcal J_{D,r}}
I_{\Pi}([V]_{b_*};[Y]_{b_*}),
\label{eq:compound-pack-finite}\\
R_k+\hat R&>I([V]_{b_*};S,U).
\label{eq:compound-cover-latent}
\end{align}
The first two inequalities have the usual state-covering and compound-packing structure of a common-auxiliary Gel'fand--Pinsker construction~\cite{PiantanidaShamai2010CompoundState}; the third is the additional latent-coordination requirement imposed by exact preservation. The project-wide excess variable $t=\hat R-I(V;S)$ is a derived quantity and is not the raw codebook rate in this argument.

Generate one i.i.d. $p([v]_{b_*})$ codebook indexed by $(m,\ell,k)$ with
$
\ell\in[1:2^{n\hat R}].
$
Given $(S^n,m,k)$, use the likelihood encoder associated with the channel $[V]_{b_*}\to S$ to select $\ell$, and generate a preliminary latent sequence through the memoryless kernel $p(u|s,[v]_{b_*})$. For a fixed key index, the detector quantizes the attacked public sample coordinatewise to $[Y]_{b_*}^n$ and decodes $(m,\ell)$ over the finite compound family induced by~\eqref{eq:uniform-quantized-margin}.

The universal decoder must work for the same i.i.d. $p([v]_{b_*})$ codebook ensemble used by the likelihood encoder and synthesis layer. For this finite-alphabet memoryless family, apply the universal decoder of~\cite{Merhav2013Universal} relative to the collection of matched maximum-likelihood metrics. Its ensemble-average error under the given random-coding distribution is within a subexponential factor of the best decoder in that class. The strict uniform information margin in~\eqref{eq:uniform-quantized-margin}, together with compactness of the induced channel family, gives a uniformly positive matched-decoding random-coding exponent; hence this same i.i.d. ensemble has vanishing expected worst-attack average decoding error under the corresponding ideal memoryless law. The likelihood-encoder transfer from the ideal state law to the operational observed state is controlled by~\eqref{eq:compound-cover-state}; every renderer and attack is downstream, so total variation contracts uniformly over $A\in\Areg(D)$.

The synthesis layer is attack independent. For each fixed $(m,k)$, the $\ell$-subcodebook has total rate $\hat R>I([V]_{b_*};S)$, so ordinary general-alphabet soft covering yields vanishing expected state-covering total variation. For each fixed message $m$, the aggregate $(k,\ell)$-indexed subcodebook has rate $R_k+\hat R>I([V]_{b_*};S,U)$, so the same general-alphabet likelihood-encoder/soft-covering machinery gives vanishing expected synthesis total variation for the joint $(S,U)$ law~\cite{Cuff2013Synthesis,SongCuffPoor2016Likelihood}. Averaging over the relevant indices, together with the worst-attack decoding bound above, allows one deterministic codebook to be selected for which all three average criteria vanish. Expurgate the $o(1)$ fraction of messages with atypically large preliminary synthesis error. This preserves the asymptotic message rate and worst-attack average reliability while making the preliminary synthesis error uniformly vanishing over the retained messages.
\draftEightEnd
For every retained message, apply the conditional maximal-coupling exactification from App.~\ref{app:finite-ach} before the physical renderer.  The correction forces
\draftEightEnd
\begin{equation}
\draftEightBegin
p(u^n|s^n,m)
=
\prod_{i=1}^n q_{\rm lat}(u_i|s_i)
\label{eq:compound-exact-latent}
\draftEightEnd
\end{equation}
\draftEightBegin
exactly.  Passing this latent through the fixed renderer gives the exact desired conditional output law before attack, and every $A\in\Areg(D)$ preserves that law by definition.  If the exactification mismatch probability for message $m$ is $\delta_n(m)$, total variation contracts under the renderer and every downstream attack, so the reliability change is at most $\delta_n(m)$ uniformly over the entire attack family.  This proves achievability of every rate strictly below~\eqref{eq:compound-capacity}.

For the converse, the canonical Gaussian regeneration~\eqref{eq:canonical-attack} belongs to the admissible renderer-realizable family.  Expanding its residual recursion gives the effective renderer
\draftEightEnd
\begin{equation}
\draftEightBegin
Y
=
aS+\beta(D)U+Z_{\rm eff},
\label{eq:compound-canonical-effective}
\draftEightEnd
\end{equation}
\draftEightBegin
where $Z_{\rm eff}\perp(S,U)$ is Gaussian with variance $\sigma_{Z,{\rm eff}}^2=\sigma_{\bar X}^2-\beta(D)^2\sigma_U^2$. This is a scalar instance of the jointly Gaussian renderer in Thm.~\ref{thm:gaussian}. That theorem already allows the context to remain hidden from the detector and shows that revealing it causes no capacity increase. Evaluating~\eqref{eq:gaussian-capacity} at the effective SNR $\gamma_{\rm eff}(D)=\beta(D)^2\sigma_U^2/[\sigma_{\bar X}^2-\beta(D)^2\sigma_U^2]$ gives exactly the right side of~\eqref{eq:compound-capacity}.  Since every compound code must work against this particular admissible attack, this fixed-attack capacity upper-bounds the compound capacity.  Combined with the uniform achievability above, it proves Thm.~\ref{thm:compound}.
\draftEightEnd

\ifdefined\SUPPRESSACKNOWLEDGMENT
\else
\section*{Acknowledgment}
The problem formulation, theorem statements, and proof strategy for each result were developed by the authors. The random-projection construction in Proposition~\ref{prop:strict-converse-example} was proposed by OpenAI ChatGPT in response to the authors' question of whether the outer bound in Theorem~\ref{thm:finite} can be strictly loose; the authors verified the construction and developed its interpretation. The detailed proofs were developed interactively with ChatGPT: for each result, the authors specified the proof structure and key steps, ChatGPT produced candidate arguments, and these were iteratively corrected and tightened under the authors' direction. For steps following standard arguments, including soft covering, Fourier--Motzkin elimination, and the Csisz\'ar sum identity, ChatGPT assisted in drafting routine details. ChatGPT also assisted with drafting and revising expository text, literature screening, and writing numerical-check code. Anthropic Claude was used to independently re-derive the proofs, reproduce the numerical results from the stated formulas, spot-check references, and provide editorial feedback. The authors reviewed every proof line by line, checked the cited technical sources against the original references, and take full responsibility for the final content.
\fi

\bibliographystyle{IEEEtran}
\bibliography{references/citations,references/draft8_additions}

\end{document}